\numberwithin{equation}{section}
\newtheorem{definition}{Definition}[section]
\newtheorem{prop}[definition]{Proposition}
\newtheorem{remarkth}[definition]{Remark}
\newenvironment{remark}{\begin{remarkth}\upshape}{\hfill$\diamond$\end{remarkth}}
\def\qed{\ifvmode\removelastskip\fi
{\unskip\nobreak\hfil\penalty50\hbox{}\nobreak\hfil \hbox{\vrule
height1.2ex width1.2ex}\parfillskip=0pt \finalhyphendemerits=0
\par\smallskip}}
\newcommand{\ben}{\begin{enumerate}}
\newcommand{\een}{\end{enumerate}}
\newcommand{\ds}{\displaystyle}
\def\derpar#1#2{\ds\frac{\partial{#1}}{\partial{#2}}}
\def\bcdb{\begin{color}{darkblue}}
\def\bcr{\begin{color}{red}}
\def\bcb{\begin{color}{blue}}
\def\enc{\end{color}}
\def\r{\ensuremath{\mathbb{R}}}
\def\rk{{\mathbb R}^{k}}
\def\tkq{T^1_kQ}
\def\lto{\longrightarrow}
\def\tm{T^1_kQ\oplus(T^1_k)^*Q}
\def\tkqh{(T^1_k)^*Q}
\def\to{\rightarrow}
\def\lto{\longrightarrow}
\def\r{\mathbb{R}}
\def\rk{\mathbb{R}^k}
\def\r{\mathbb{R}}
\def\tkq{T^1_kQ}
\def\r{\ensuremath{\mathbb{R}}}
\def\rk{{\mathbb R}^{k}}
\def\tabaddress#1{{\small\it\begin{tabular}[t]{c}#1
\\[1.2ex]\end{tabular}}}
\def\proof{ ({\sl Proof\/}) }
\title{Tulczyjew's derivations and intrinsic field equations in classical field theories}
\author{\sc Modesto Salgado\thanks{{\bf e}-{\it mail}:
  modesto.salgado@usc.es}
  \\
  \tabaddress{Departamento de Matem\'aticas  \\
 Facultade de Matem\'{a}ticas,
    Universidade de Santiago de Compostela,\\
    15706-Santiago de Compostela, Spain}
   \\
 {\sc      Silvia Vilari\~no\thanks{{\bf
e}-{\it mail}:
silviavf@unizar.es}} \\
  \tabaddress{Centro Universitario de la Defensa de Zaragoza $\&$ I.U.M.A.  \\
 Academia General Militar,
  Carretera de Huesca s/n\\
    50090 Zaragoza, Spain}}
\begin{document}

\maketitle

\pagestyle{myheadings}
\parskip=7pt

\thispagestyle{empty}


\begin{abstract} 
  
  This work presents the variational principles and the intrinsic versions of several equations in field theories, in particular, for the Classical Euler-Lagrange field equations, the implicit Euler-Lagrange field equations and the non-holonomic implicit Euler-Lagrange field equations. The advantages of the variational and intrinsic versions of these equations is that the Lagrangians functions are not necessary regular Lagrangians. We present two examples of this situation: Navier's equations and the non-holonomic Cosserat rod. Finally we comment the Hamiltonian case when the Lagrangian is a hyperregular function.

\end{abstract}

 \maketitle

\tableofcontents

 \newpage
\mbox{}
\thispagestyle{empty} 

 \section{Introduction}\label{sec:intro}
 
   Classical field theories  are physical theories that describe the behaviour of one or more physical fields through field equations. Recall that a physical field can be thought of as the assignment of a physical quantity at each point of space and time.
    
At present, it is common to model classical field theories using different mathematical formalisms. For instance, in the literature there exist many geometric models that describe classical  field theories. Just to name a few of them: the polysymplectic \cite{Kana,Sarda_1993}, the $n$-symplectic \cite{Norris_2001},  the $k$-cosymplectic \cite{LMeS-2001},  the
multisymplectic  \cite{CCI-91,EMR-1996,Gymmsy,KT-79} and the jet  formalisms \cite{Saunders-89}.

The main differences between all these models depend on the choice one makes for the geometric and the differentiable structure of both the space of parameters $x^\alpha$ (such as space-time) and the space of fields $\phi^i$. The model we will use in this paper is the one of $k$-symplectic field theory, as developed in  the papers \cite{Awane-1992,LSV-2015,MRS-2004,Gu-1987,RSV-2007}.

 Let us recall that the $k$-symplectic   formalism \cite{LSV-2015}    is the generalization to
first order classical field theories of the standard symplectic formalism in
mechanics, which is the geometric framework for describing autonomous dynamical
systems.
   
   The k-symplectic formalism is used to give a geometric
description of certain kinds of field theories: in a local description, those whose Lagrangians do not depend on the coordinates in the basis (in many of them, the space-time coordinates); that is, it is only valid for Lagrangians $L(q^i,v^i_\alpha)$   and Hamiltonians $H(q^i,p^\alpha_i)$  that depend on the field coordinates $q^i$ and on the partial derivatives of the field $v^i_\alpha$ or the corresponding momenta $p^\alpha_i$. Thus we consider the   Lagrangian and Hamiltonian functions as   maps $L:\tkq \to \r$, and $H:\tkqh \to \r$. This formalism characterizes the (regular) field theory in terms of a certain class of so-called `$k$-vector fields' on $T^1_kQ$ and $(T^1_k)^*Q$, which are literally collections of $k$ individual vector fields.

Although we do not strictly use this formalism in this work, if we use the bundles $\tkq$ and $\tkqh$, which are part of this formalism. The idea is to give  new descriptions of several equations of fields theories, using geometric structures but not the $k$-vector fields. The advantages of these new descriptions are that this description is valid when the Lagrangian is singular or when we consider non-holonomic constraints.

 Explicitly, the main  aim of this paper is to present the variational principles and the  intrinsic versions of the 
\begin{quote} 
 \begin{enumerate}
 
 \item  Euler-Lagrange field equations, 
 
  \item  Implicit Euler-Lagrange field equations,
  
  \item  Non-holonomic implicit Euler-Lagrange field equations.
  
  \item  Hamilton-de Donder-Weyl equations and  non-holonomic
 Hamilton-de Donder-Weyl equations.
  
   \end{enumerate}
  \end{quote} 
   
   The organization of the paper is as follows:  
  Section \ref{sec:geo_elem}  introduces the necessary bundles along this work, that is:
   \begin{itemize}
  \item   The   tangent bundle of $k^1$-velocities        $T^1_kQ=TQ\oplus_Q\stackrel{k}{\dots}\oplus_Q TQ$, that is  the Whitney sum of $k$ copies of the tangent bundle $TQ$ of a manifold $Q$.  
   \item  The cotangent bundle of $k^1$-velocities $(T^1_k)^*Q= T^*Q\oplus_Q\stackrel{k}{\dots}\oplus_Q T^*Q$, that is the Whitney sum of $k$ copies of the cotangent bundle $T^*Q$.
    \item  The generalized Pontryagin bundle $\mathcal{M}= T^1_kQ\oplus_{Q} (T^1_k)^*Q$, which it is necessary  in order to obtain (non-holonomic) implicit Euler-Lagrange equations.
     \end{itemize}

   In the tangent and cotangent  bundles of $k^1$-velocities and   $k^1$-covelocities,  we have canonical  forms, and we can define natural  prolongations of maps and vector fields which will are fundamental  to develop  the  main aim in  section \ref{sec:EL_eq}. 
   
   In Section \ref{sec:Tulczyjew}, defining an  extension of     the   Tulczyjew's derivations  \cite{T1,T2,T3},   we obtain two $1$-forms  $\lambda$ and $\chi$, on $T^1_k(\tkqh)$, which are also fundamental  for giving the intrinsic version of the field equations in  section \ref{sec:EL_eq}.
  
    With all these tools in Section \ref{sec:EL_eq} and Section \ref{sec:HDWeq} we establish the   variational principles and the  intrinsic versions of the corresponding field equations. For the   intrinsic version of  (i)  we consider a Lagrangian $L$ and the $1$-form   $\lambda$;  for (ii)  and (iii) we use the generalized energy function $E\colon \mathcal{M}\to \mathbb{R}$ and the $1$-form   $\chi$.  In Section \ref{sec:HDWeq}, we describe  an intrinsic version of  the equations (iv), using   $E$ and $\chi$.
    
    As example of point (ii) we describe the Navier's equations, and as example of point (iii) we describe the nonholonomic Cosserat rod. 
    
    Some results of Sections \ref{sec:EL_eq} and \ref{sec:HDWeq} , considering the case $k=1$,  are 
    generalizations of some results given \cite{YM-2006a,YM-2006b}.

 Section \ref{sec:conclusions} summarises the results of the work and provides some hints on future research. 
 
 Finally, unless otherwise stated, we assume all mathematical objects to be real, smooth and globally defined.

\section{Geometric elements}\label{sec:geo_elem}
In this section, we provide a quick overview of the natural bundles for the study of field theories using the geometric elements of the $k$-symplectic setting \cite{Awane-1992,LSV-2015,MRS-2004,Gu-1987,RSV-2007}. More details about the material of this section can be found in \cite{LSV-2015} and the references therein. Using these two bundles we define the ``$k$-Pontryagin bundle'' $\mathcal{M}\colon =T^1_kQ\oplus (T^1_k)^*Q$, one generalization of the usual ``Pontryagin bundle'' $TQ\oplus_Q T^*Q$. This manifold $\mathcal{M}$ is very important along this paper in particular in the description of the implicit version of the Euler-Lagrange field equations (with or without non-holonomic constraints).

\subsection{The tangent bundle of $k^1$-velocities}\label{subsec:k-tangent}

Let $\tau_M\colon TM \to M$ be the tangent bundle of a differentiable manifold $M$. We will use the notation $T^1_kM$ for the Whitney sum
$TM\oplus\stackrel{k}{\dots}\oplus TM$ of $k$ copies of $TM$ and $\tau^k_M$
 for the corresponding projection $\tau^k_M \colon T^1_kM\to M$ which maps $(v_{1_m},\ldots , v_{k_m})$ onto the point $m\in M$.
 
$T^1_kM$ can be identified with the manifold $J^1_0(\mathbb{R}^k,M)$ of $k^1$-velocities of
$M$. These are $1$-jets of maps from $\rk$ to $M$ with source at $0\in \mathbb{R}^k$.
For this reason the manifold $T^1_kM$ is called {\sl the tangent bundle of $k^1$-velocities of $M$}.

In what follows, we will denote coordinates on $\r^k$ by  $ (x^\alpha) = (x^1,\ldots, x^k)$. If $(y^I)$ (with $I=1,\ldots , \dim M$) are local coordinates on $U \subset M$ then the
induced local coordinates   $(y^I,
u^I)$ on $TU=\tau_M^{-1}(U)$ are given by
$$
y^I( v_m)=y^I(m),\qquad u^I(v_m)=v_m(y^I)\, , \quad v_m\in T_mM.
$$

These naturally induce coordinates $(y^I ,u^I_\alpha)$ (with $I=1,\ldots ,  \dim M;\, \alpha=1,\ldots, k$) for a point $(v_{1_m},\ldots , v_{k_m})$ in 
$T^1_kU=(\tau_M^k)^{-1}(U)$,
such that  $u^I_\alpha$ are the components of the $\alpha$'th vector $v_{\alpha_m}$  along the natural basis of $T_mM$
$$
v_{\alpha_m}= u^I_\alpha \, \derpar{}{y^I}\Big\vert_m,
$$ that is,
\begin{equation}\label{coortkq}
y^l(v_{1_m},\ldots , v_{k_m})=y^l(m)\, , \quad   u^I_\alpha(v_{1_m},\ldots , v_{k_m})=v_{\alpha_m}(y^I).
\end{equation}

The canonical projection $\tau^{k}_M\colon T^1_kM \rightarrow M$  
is given in local coordinates as follows
\begin{equation}\label{tk00}
\tau^{k}_M(y^I ,u^I_\alpha)=(y^I) \, .
\end{equation}

On the other hand,  we have a family of  canonical projections $\tau_M^{k,\alpha}:T^1_kM\to TM$ defined for each $\alpha\in \{1,\ldots, k\}$ by
\begin{equation}\label{map:taukalphaalpha}
\tau_M^{k,\alpha} (v_{1_m},\ldots , v_{k_m})
=
    {v_\alpha}_m\,,
\end{equation}
given  in local coordinates by
\begin{equation}\label{loctkalfa}
\tau_M^{k,\alpha}(y^I ,u_1^I,\ldots, u_k^I)=(y^I,u_\alpha^I ).
\end{equation}

We continue this subsection recalling some geometric elements defined on the tangent bundle of $k^1$-velocities, which will be important along this work. These elements are the canonical prolongations of maps, the complete lifts of vector fields and finally the first prolongation of maps.

\paragraph{A. Canonical prolongations of maps  $\varphi\colon M\to N$}\

Let $\varphi\colon M\to N$ be a differentiable map. In what follows, we will  make use of the {\sl canonical prolongation of $\varphi$}, which is  the induced
map $T^1_k\varphi:T^1_kM \to  T^1_k N$  defined by
\[
  T^1_k\varphi(v_{1_m},\ldots , v_{k_m})=
(\varphi_*(m)(v_{1_m}),\ldots,\varphi_*(m)(v_{k_m})) \,  .
\]

 \paragraph{B. Complete lifts of vector fields}\
 
We now recall the notion of  canonical prolongation of a vector field $X\in \mathfrak{X}(M)$ to $T^1_kM$, that is the   complete lift $X^C\in\mathfrak{X}(T^1_kM)$. If $X$ has a local $1$-parametric group
of transformations $\varphi_t \colon Q \to Q$, then the local
$1$-parametric group of transformations $T^1_k\varphi_t\colon T^1_kM
\to T^1_kM$ generates a vector field $Z^C$ on $T^1_kM$, the {\sl complete lift} of $X$ to $ T^1_kQ$. Its local expression is
\begin{equation} \label{clift}
X^C=X^I\frac{\partial}{\partial y^I}+u^I_\alpha \frac{\partial X^J}
{\partial y^I}\frac{\partial}{\partial u^J_\alpha} \,.
\end{equation}

 \paragraph{C. First prolongation of maps $\psi: \r^k \to M$}\

The {\sl first prolongation $\psi^{(1)}$ of a map $\psi: \r^k \to M$} is the map $\psi^{(1)}\colon\r^k\to T^1_kM$, defined by
\[
\psi^{(1)}(x)=
 \left(\psi_*(x)\left(\derpar{}{x^1}\Big\vert_x\right),\ldots,
\psi_*(x)\left(\derpar{}{x^k}\Big\vert_x\right)\right) \, .
\]
In local coordinates, we have
\begin{equation}\label{localfi11}
\psi^{(1)}(x)=\left( \psi^I (x), \frac{\partial\psi^I}{\partial x^\alpha} (x)\right), \qquad  1\leq \alpha\leq k\, ,\, 1\leq I\leq \dim M \, .
\end{equation}

\subsection{The cotangent bundle of $k^1$-covelocities}\label{subsec:k-cotangent}

Let $\pi_M\colon T^*M \to M$ be the cotangent bundle of the manifold $M$. We will use the notation $(T^1_k)^*M$ for the Whitney sum
$T^*M\oplus\stackrel{k}{\dots}\oplus T^*M$ of $k$ copies of $T^*M$ and $\pi^k_M$
 for the corresponding projection $\pi^k_M \colon (T^1_k)^*M\to M$ which maps $(\nu^1_m,\ldots ,
\nu^k_m)$ onto the point $m\in M$.

$(T^1_k)^*M$ can be identified with the manifold $J^1(M,\mathbb{R}^k)_0$ of $k^1$-covelocities of
$M$. These are $1$-jets of maps from $M$ to $\rk$ with target at $0\in \mathbb{R}^k$.
For this reason the manifold $(T^1_k)^*M$ is called {\sl the cotangent bundle of
$k^1$-covelocities of $M$}.

If $(y^I)$ are local coordinates on $U \subset M$ then the
induced local coordinates     $(y^I,
z_I)$ on $T^*U=\pi_M^{-1}(U)$ are given by
$$
y^I( \nu_m)=y^I(m)\, ,\qquad z_I(\nu_m)=\nu_m(\derpar{}{y^I}\Big\vert_m), \quad \nu_m\in T^*_mM.
$$

These naturally induce coordinates $(y^I ,z_I^\alpha)$  for a point $(\nu^1_m,\ldots ,
\nu^k_m)$ in 
$(T^1_k)^*U=(\pi_M^k)^{-1}(U)$,
such that  $z^\alpha_I$ are the components of the $\alpha$'th covector ${\nu^\alpha_m}$  along the natural basis of $T_mM$
  that is
\begin{equation}\label{coortkqh}
y^l(\nu^1_m,\ldots ,
\nu^k_m)=y^l(m), \quad   z_I^\alpha(\nu^1_m,\ldots ,
\nu^k_m)=
\nu^\alpha_m(\derpar{}{y^I}\Big\vert_m)\,.
\end{equation}

The canonical projection $\pi^{k}_M\colon (T^1_k)^*M \rightarrow M$  
is given in local coordinates as follows
\begin{equation}\label{tkalfa}
\pi^k_M(y^I ,z_I^\alpha)=(y^I )\, ,
\end{equation}
and  
for each $\alpha\in \{1,\ldots, n\}$,   the  canonical projections $\pi_M^{k,\alpha}:(T^1_k)^*M\to T^*M$ are defined by
\begin{equation}\label{map:taukalpha}
\pi_M^{k,\alpha} (\nu_{1_m},\ldots , \nu_{k_m})
=
    {\nu_\alpha}_m \, ,
\end{equation}
and its local expression is 
\begin{equation}\label{loctkqalfa}
\pi_M^{k,\alpha}(y^I ,z_I^1,\ldots, z^k_I)=(y^I,z^\alpha_I )\, .
\end{equation}

As in the case of the tangent bundle of $k^1$-velocities, we now recall the canonical prolongations of maps and vector field.
  
  \paragraph{A. Canonical prolongations of maps  $\varphi\colon M\to N$}\

  Let $\varphi\colon M \to N$	a map. The natural or canonical prolongation of $(T^1_k)^*\varphi$ to the corresponding bundles of $k^1$-covelocities is the map 
$
(T^1_k)^*\varphi\colon (T^1_k)^*N \to (T^1_k)^*M
 $
 defined as follows:
 \begin{equation}\label{exp:k-cotangent_prolongation00}
  (T^1_k)^*\varphi(\nu_{\varphi(m)}))=(\varphi^*(\nu_{1_{\varphi(m)}}), \ldots, \varphi^*(\nu_{k_{\varphi(m)}}))= (\nu_{1_{\varphi(m)}}\circ \varphi_*(m), \ldots,\nu_{k_{\varphi(m)}}\circ \varphi_*(m))\,,
 \end{equation}
 where $\nu_{\varphi(m)}=(\nu_{1_{\varphi(m)}}, \ldots, \nu_{k_{\varphi(m)}})\in (T^1_k)^*N$ and $m\in M$.

 \paragraph{B. Complete lifts of vector fields}\
 
Now let $X$ be a vector  field on $M$  with  local $1$-parametric group
of transformations $\varphi_t \colon M \to M$, then the local
$1$-parametric group of transformations $(T^1_k)^*\varphi_t\colon (T^1_k)^*M
\to (T^1_k)^*M$ generates a vector field $X^{C^*}$ on $T^1_kM$, the {\sl complete lift} of $X$ to $ (T^1_k)^*M$. Its local expression is
\begin{equation} \label{clifth}
X^{C^*}=X^I\frac{\partial}{\partial y^I}-z^\alpha_J \frac{\partial X^J}
{\partial y^I}\frac{\partial}{\partial z^\alpha_I} \, .
\end{equation}

\subsection{The Pontryagin bundle}\label{subsec:Pontryagin}

In order to introduce the Hamilton-Pontryagin principle  we define the  generalized Pontryagin bundle $\mathcal{M}= T^1_kQ\oplus_{Q} (T^1_k)^*Q$. This bundle plays a similar role as the Pontryagin bundle $TQ\oplus_QT^*Q$ over a configuration manifold $Q$ for the case of classical mechanics.

In this section we consider the geometric elements over this bundle, which are necessary in the rest of the paper.

Let us consider the Whitney sum $\mathcal{M}= T^1_kQ\oplus_Q (T^1_k)^*Q$ of the tangent bundle of $k^1$-velocities and the cotangent bundle of  $k^1$-covelocities of a differentiable manifold $Q$. This manifold is called the \textit{$k$-Pontryagin bundle}.

An element of $\mathcal{M}$ is a pair $(\rm{v}_q,\nu_q)$ where $\rm{v}_q=({v_1}_q,\ldots, {v_k}_q)\in T^1_kQ$ and $\nu_q=(\nu^1_q, \ldots, \nu^k_q)\in (T^1_k)^*Q$. It has natural bundle structures over $T^1_kQ$ and $(T^1_k)^*Q$. 

Let us denote by $pr_1\colon \mathcal{M}=T^1_kQ\oplus_Q (T^1_k)^*Q\to T^1_kQ$ the projection into the first factor, $pr_1(\rm{v}_q,\nu_q)=\rm{v}_q$ and by $pr_2\colon \mathcal{M}=T^1_kQ\oplus_Q (T^1_k)^*Q\to (T^1_k)^*Q$ the projection into de second factor, $pr_2(\rm{v}_q,\nu_q)=\nu_q$. 

We denote by $pr^\mathcal{M}_Q\colon \mathcal{M}=T^1_kQ\oplus_Q (T^1_k)^*Q\to Q$ the projection into the configuration space $Q$, $pr^\mathcal{M}_Q(\rm{v}_q,\nu_q)=q$.

  Taking into account (\ref{coortkq})  and (\ref{coortkqh}) each coordinate system $(y^I)\equiv(q^i)$ defined on an open neighbourhood $U\subset Q$, induces the local bundle coordinate system 
$(y^I,z^I_\alpha)\equiv(q^i,v^i_\alpha)$ on $(\tau^k_Q)^{-1}(U)$, the local bundle coordinate system  $(y^I,z^\alpha_I)\equiv(q^i, p^\alpha_i)$ on $(\pi^k_Q)^{-1}(U)$ and  
$(q^i,v^i_\alpha, p^\alpha_i)$  on  $(pr^{\mathcal{M}}_Q)^{-1}(U)$ defined as follows:
\begin{equation}\label{eq:coord_M}
q^i(\rm{v}_q,\nu_q)=q^i(q)\, ,\quad v^i_\alpha(\rm{v}_q,\nu_q)=v^i_\alpha(\rm{v}_q)= v_{\alpha_q}(q^i)\, ,\quad p^\alpha_i(\rm{v}_q,\nu_q)=p^\alpha_i(\nu_q)=\nu^\alpha_q\Big(\displaystyle\frac{\partial}{\partial q^i}\Big\vert_{q}\Big)\,.
\end{equation}

These coordinates endow to $\mathcal{M}$ of a structure of differentiable manifold of dimension $n(2k+1)$.

\subsubsection{Canonical prolongations of diffeomorphisms and vector fields }

Using the definition of the tangent and cotangent map we introduce the prolongation of a diffeomorphism.

Let $\varphi\colon Q\to Q$ be a diffeomorphism. The natural or canonical prolongation of $\varphi$ to the corresponding $k$-Pontryagin bundles is the map
\[
 \tau^1_k\varphi\colon \mathcal{M}=T^1_kQ\oplus_Q (T^1_k)^*Q\to \mathcal{M}=T^1_kQ\oplus_Q (T^1_k)^*Q
\] defined by
\begin{equation}\label{exp:prontryagin_prolongation}
\tau^1_k\varphi(\rm{v}_q,\nu_q)=(T^1_k\varphi(\rm{v}_q), (T^1_k)^*\varphi(\nu_q))\,,
\end{equation}
where $T^1_k\varphi$ and $(T^1_k)^*\varphi$ are the natural prolongations of $\varphi$ introduced in  Sections \ref{subsec:k-tangent} and \ref{subsec:k-cotangent}, respectively.

The above definition allows us to introduce the canonical or complete lift of vector fields from $Q$ to $\mathcal{M}=T^1_kQ\oplus_Q(T^1_k)^*Q$.
\begin{definition}\label{def:complete_pontryagin_lift}
Let $Z$ be a vector field on $Q$, with $1$-parameter group of diffeomorphism $\{\varphi_s\}$. The canonical o complete lift of $Z$ to the $k$-Pontryagin bundle $\mathcal{M}=T^1_kQ\oplus_Q(T^1_k)^*Q$ is the vector field $Z^1$ whose local $1$-parameter group of diffeomorphism is $\{\tau^1_k\varphi_s\}$.
\end{definition}

In local canonical coordinates (\ref{eq:coord_M}), if $Z=Z^i\nicefrac{\partial}{\partial q^i}$, the local expression of $Z^1$ is
\begin{equation}\label{local:prontryagin_lift}
 Z^1=Z^i\displaystyle\frac{\partial}{\partial q^i} + v^j_\alpha\displaystyle\frac{\partial Z^k}{\partial q^j}\displaystyle\frac{\partial}{\partial v^k_\alpha} - p^\alpha_j\displaystyle\frac{\partial Z^j}{\partial q^k}\displaystyle\frac{\partial}{\partial p^\alpha_k}\,.
\end{equation}
Compare this local expression with  (\ref{clift}) and (\ref{clifth}).
\subsubsection{Canonical forms on   $\mathcal{M}$ }
We now introduce certain canonical forms on $(T^1_k)^*Q$ and  $\mathcal{M}$. We consider the  \emph{canonical $1$-forms} $\Theta^1,\ldots, \Theta^k$ on $(T^1_k)^*Q$ as the pull-back of Liouville's $1$-form $\theta$ by the canonical projection $\pi_Q^{k,\alpha}:\tkqh \to T^*Q$ , that is, for each $1\leq \alpha\leq k $
\begin{equation}\label{eq:Theta_alpha}
\Theta^\alpha = (\pi^{k,\alpha})^*\theta\,;
\end{equation}
the \emph{canonical $2$-forms} $\Omega^1,\ldots, \Omega^k$ are defined by
\begin{equation}\label{eq:Omega_alpha}
\Omega^\alpha=-d\Theta^\alpha \, ,
\end{equation}
or equivalently by $\Omega^\alpha=(\pi^{k,\alpha})^*\omega$ being $\omega$ the canonical symplectic form on the cotangent bundle $T^*Q$.

If we consider the canonical coordinates $(q^i,p^\alpha_i)$ on  $(T^1_k)^*Q$  then  the canonical forms $\Theta^\alpha, \Omega^\alpha$ have the following local expressions:
\begin{equation}\label{eq:Locformcan}
\Theta^\alpha=p^\alpha_idq^i\,,\quad \Omega^\alpha=dq^i\wedge dp^\alpha_i\,.
\end{equation}
  
We will also consider    the forms on $\mathcal{M}$
\begin{equation}\label{eq:can_form_M}
\Theta^\alpha_{\mathcal{M}}=(pr_2)^*(\Theta^\alpha)\, ,\quad \Omega^\alpha_{\mathcal{M}}=(pr_2)^*(\Omega^\alpha)\, ,\quad 1\leq \alpha\leq k\,,
\end{equation}
with local expressions
\begin{equation}\label{eq:local_can_form_M}
\Theta^\alpha_{\mathcal{M}}=p^\alpha_idq^i\, ,\quad \Omega^\alpha_{\mathcal{M}}=dq^i\wedge dp^\alpha_i\, ,\quad 1\leq \alpha\leq k\,.
\end{equation}
\subsubsection{Generalized energy function}

Let $L\colon T^1_kQ\to \mathbb{R}$ be a Lagrangian function, which is possible degenerate. We define the \textit{generalized energy function} associated to $L$ by the map $E\colon \mathcal{M}=T^1_kQ\oplus_Q (T^1_k)^*Q\to \mathbb{R}$ defined as
\begin{equation}\label{eq:energy}
 E(\rm{v}_q,\nu_q) = <<\rm{v}_q,\nu_q,>> - \,L(\rm{v}_q)\,,
\end{equation}
for each $(\rm{v}_q,\nu_q)\in \mathcal{M}=T^1_kQ\oplus_Q (T^1_k)^*Q$. In the above definition $<<\cdot, \cdot>>\colon T^1_kQ\oplus (T^1_k)^*Q\to \mathbb{R}$ is the map defined by
\begin{equation}\label{def:pairing}
<<\rm{v}_q,\nu_q,>>=\nu_q(\rm{v}_q)=\displaystyle\sum_{\alpha=1}^n \nu^\alpha_q(v_{\alpha_q})\,.
\end{equation}

In the induced local coordinates system (\ref{eq:coord_M}) we obtain
\begin{equation}\label{local:energy}
E(q^i, v^i_\alpha, p^\alpha_i)= p^\alpha_i v^i_\alpha- L(q^i, v^i_\alpha)\,.
\end{equation}

\subsection{The Legendre transformation}\label{subsec:legendre}

In order to stablish a relationship between the Lagrangian and Hamiltonian version of the equations of classical field theories, we recall the definition of the Legendre transformation $FL$ between the tangent bundle of $k^1$-velocities and the cotangent bundle of $k^1$-covelocities.

For each Lagrangian function $L\in \mathcal{C}^\infty(T^1_kQ)$ it is possible to consider  the \textit{Legendre transformation} associated to $L$ as the map  $FL:T^1_kQ\to (T^1_k)^*Q$ defined as follows:
\[
FL({\rm v}_q)= ( [FL({\rm v}_q)]^1,\ldots ,[FL({\rm v}_q)]^k )
\]
where
\[ [FL({\rm v}_q)]^\alpha(u_{q})=
\displaystyle\frac{d}{ds}\Big\vert_{s=0}\displaystyle L
\left(
{v_1}_{q}, \dots,{v_\alpha}_{q}+su_{q}, \ldots, {v_k}_{q}
\right)\,,
\]
for  $1\leq \alpha\leq k $, $u_{q}\in T_{q}Q$ and ${\rm v}_q=({v_1}_{q},\ldots, {v_k}_{q})\in T^1_kQ$.

Using natural coordinates $(q^i, v^i_\alpha)$ on $T^1_kQ$ and $(q^i, p^\alpha_i)$ on $(T^1_k)^*Q$, the local expression of the Legendre map is
\begin{equation}\label{map:LocLegtkq}
FL(q^i, v^i_\alpha) =\Big(q^i, \frac{\displaystyle\partial L}{\displaystyle\partial v^i_\alpha } \Big)\, .
\end{equation}

Let us recall that a a Lagrangian function  $L: T^1_kQ\longrightarrow \mathbb{R} $ is said to be  \emph{regular} (resp. \emph{hyperregular}) if the Legendre map $FL$ is a local diffeomorphism (resp. global). In other case  $L$ is said to be  \emph{singular}.

From (\ref{map:LocLegtkq}) we know that $L$ is regular if and only if the matrix $\Big(\displaystyle\frac{\partial^2 L}{\partial v^i_\alpha\partial v^j_\beta}\Big)$ is not singular.
\section{Tulczyjew's derivations  and canonical forms}\label{sec:Tulczyjew}
One of the aim of this paper is to obtain an alternative description of the Lagrangian and Hamiltonian field equations. In a similar description on the case of Lagrangian and Hamiltonian Mechanics it is possible to obtain a symplectic structure on $TT^*Q$. This symplectic form can be defined by two different ways as the exterior derivative of two intrinsic one-forms on $TT^*Q$.

 The aim of this section is to extend that construction and obtain  two intrinsic $1$-forms $\chi$ and $\lambda$  on the space $T^1_k((T^1_k)^*Q)$.

In order to define these two $1$-forms it is necessary to consider Tulczyjew's derivations.

\subsection{Tulczyjew derivations on $T^1_kM $}

 Let us denote by $\bigwedge N$   the algebra
of the exterior differential forms on an arbitrary manifold $N$. In
\cite{T1,T2}, a derivation $$i_{T}: \bigwedge M\to \bigwedge TM $$ of degree $-1$   over   the canonical projection  $\tau_M:TM\to M$ was defined in
an arbitrary manifold $M$ by $i_{T}\mu=0$ if $\mu$ is a function on
$M$, and by
$$
i_{T}\mu(v_x)  (Z^1_{v_x},\ldots ,Z^l_{v_x})=\mu
(x)(v_x,(\tau_M)_*(v_x)(Z^1_{v_x}),\ldots
,(\tau_M)_*(v_x)(Z^l_{v_x}))\, ,$$
if $\mu$ is a $(l+1)$-form, where $x\in M$, $Z^r_{v_x}\in T_{v_x}(TM), \, 1\leq r\leq l$.

A derivation $$d_T : \bigwedge M\to \bigwedge TM $$   of degree $0$   over $\tau_M$ is defined by $d_T\mu=i_Td\mu +di_T\mu$, where $d$
is the exterior derivative. We have $dd_T=d_Td$.

  We extend the above definitions   of $i_T$ and $d_T$  as follows: for every $\alpha=1 ,
\ldots , k$ we  define a derivation $$i_{\displaystyle T_\alpha}  : \bigwedge M\to \bigwedge T^1_kM $$   of degree $-1$
  over $\tau^k_M:T^1_kM\to M$
by $i_{T_\alpha}\mu=0$ if $\mu$ is a function on $M$, and by
\begin{equation}\label{itmua}
i_{\displaystyle T_\alpha}\mu(w_x)  (\widetilde{Z}^1_{w_x},\ldots
,\widetilde{Z}^l_{w_x})=\mu
(x)(  \tau_M^{k,\alpha} (w_x),(\tau^k_M)_*(w_x)(\widetilde{Z}^1_{w_x}),\ldots
,(\tau^k_M)_*(w_x)(\widetilde{Z}^l_{w_x}))\, ,\end{equation}
if $\mu$ is an $(l+1)$-form,  $w_x\in T^1_kM$
and $\widetilde{Z}^r_{w_x}\in T_{w_x}(T^1_kM), \quad 1\leq r \leq
l $.

  We define, for each  $\alpha=1 ,
\ldots , k$,  a  derivation $$d_{T_\alpha}: \bigwedge M \to \bigwedge T^1_kM$$ of degree $0$   over $\tau^k_M$ is defined by $$d_{T_\alpha}\mu=i_{T_\alpha}
d\mu +di_{T_\alpha}\mu\, , $$  where $d$ is the exterior derivative. We have
$d\, d_{T_\alpha}=d_{T_\alpha}d$.

\subsection {Canonical $1$-forms $\chi, \lambda$ on $T^1_k((T^1_k)^*Q)$}\label{chilambda}

We now consider the above definitions with $M=(T^1_k)^*Q$.  Then the Tulczjew's derivations on $T^1_kM=T^1_k((T^1_k)^*Q)$ are the following maps:
$$d_{T_\alpha}: \bigwedge (T^1_k)^*Q \to \bigwedge T^1_k((T^1_k)^*Q)
\qquad  i_{T_\alpha}: \bigwedge (T^1_k)^*Q \to \bigwedge T^1_k((T^1_k)^*Q)$$

   With  the canonical $1$-forms $\Theta^\alpha$ on $(T^1_k)^*Q$   we can define the intrinsic $1$-form  on $T^1_k((T^1_k)^*Q)$
\begin{equation}\label{eq:lambda}
 \lambda=\displaystyle\sum_{\alpha=1}^kd_{T_\alpha}\Theta^\alpha\,.
\end{equation}

In a similar way we can define another intrinsic $1$-forms    on $T^1_k((T^1_k)^*Q)$ using the derivations of degree $-1$ and the family of canonical $2$-forms $\Omega^1,\ldots, \Omega^k$ 
 \begin{equation}\label{eq:chi}
 \chi = \displaystyle\sum_{\alpha=1}^k\iota_{T_\alpha}\Omega^\alpha\,.
\end{equation}

Since on $\tkqh$ we have local coordinates $(y^I)\equiv(q^i,p^\alpha_i)$,      we have the induced coordinates $ (y^I,z_\alpha^I)\equiv(q^i,p^\alpha_i,(v_\alpha)^i,(v_\alpha)^\beta_i)
$ on $T^1_k(\tkqh)$.

Using a computation in these local coordinates we obtain that the local expressions of $\lambda$ and $\chi$ are
\begin{equation}\label{eq_local_lambda}
\lambda=  \displaystyle\sum_{\alpha=1}^kd_{T_\alpha}\Theta^\alpha=(v_\alpha)^\alpha_i dq^i+ p^\alpha_i\, d(v_\alpha)^i \,,
\end{equation}
and
\begin{equation}\label{eq:local_chi}
\chi =   \displaystyle\sum_{\alpha=1}^k\iota_{T_\alpha}\Omega^\alpha =(v_\alpha)^i\, \,  dp^\alpha_i - (v_\alpha)^\alpha_i dq^i.
\end{equation}

\section{The intrinsic form of the Euler-Lagrange field equations}\label{sec:EL_eq}

In this section we describe the implicit Euler-Lagrange field equations in two different ways. In first place we obtain the implicit Euler-Lagrange equations for classical field theories from a variational principle. Then, we describe the intrinsic form of these equations using the canonical forms $\chi$ and $\lambda$ defined in Section \ref{sec:Tulczyjew}. 

The sketch of this Section is to give the variational principle and the intrinsic form of the Euler-Lagrange field equations in three different cases: we recall the classical case, we describe the implicit Euler-Lagrange field equations without constraints and, finally, the non-holonomic implicit Euler-Lagrange field equations.
\subsection{Classical Euler-Lagrange field equations}

In this subset we recall the classical Euler-Lagrange equations in field theories. In particular we recall the variational description of the  Euler-Lagrange field equations in the $k$-symplectic setting \cite{LSV-2015}  and then, we consider the intrinsic version of these equations.
\subsubsection{The Hamilton principle}


 Consider a Lagrangian function  $L:T^1_kQ  \to \r$.  We now define the  {\it action integral}
 \[{\mathcal J}(\phi)=\ds\int_{U_0} (L\circ \phi^{(1)})(x) d^kx\,,\]
 where $d^kx=dx^1\wedge\ldots\wedge dx^k$ is a volume form on $\rk$, $\phi:U_0\subset\rk\to Q$ is a map, with compact support,
 defined on an open set $U_0$ and $\phi^{(1)}:U_0\subset\rk\to T^1_kQ$
 denotes the first prolongation of $\phi$, introduced in Section \ref{subsec:k-tangent}.
 
  A map $\phi$ is called an
 extremal for the above action if
 \[\ds\frac{d}{ds}\mathcal{J}(\tau_s\circ\phi)\Big\vert_{s=0}=0
\]for every flow $\tau_s$ on $Q$ such that $\tau_s(q)=q$ for all
 $q$ in the boundary of $\phi(U_0)$. Since such a flow $\tau_s$ is
 generated by a vector field $Z\in\mathfrak{X}(Q)$ vanishing on the
 boundary of $\phi(U_0)$, then we conclude that $\phi$ is an extremal
 if and only if
\[\ds\int_{U_0} \left((\mathcal{L}_{Z^c} L)
\circ \phi^{(1)}\right)(x) d^kx=0\,, \] for all
$Z$ satisfying the above conditions, where $Z^c$ is the complete
lift of $Z$ to $T^1_kQ$. Putting $Z=Z^i\derpar{}{q^i}$,   from (\ref{clift}),   we know that   the local expression of the complete lift $Z^c$ is
$$
Z^c=Z^i\frac{\partial}{\partial q^i}+v^i_\alpha \frac{\partial Z^j}
{\partial q^i}\frac{\partial}{\partial v^j_\alpha} \,.
$$
Then   
integrating by parts we deduce that
$\phi(x)=(\phi^i(x))$ is an extremal of
$\mathcal{J}$ if and only if
\[\ds\int_{U_0} \left[\displaystyle \sum_{\alpha=1}^k\ds\frac{\partial}{\partial x^\alpha}\Big\vert_{x}
\left(\frac{\displaystyle\partial L}{\displaystyle
\partial v^i_\alpha}\Big\vert_{\phi^{(1)}(x)} \right)- \frac{\displaystyle \partial
L}{\displaystyle
\partial q^i}\Big\vert_{\phi^{(1)}(x)} \right]Z^i d^kx=0\,, \]for all values of $Z^i$.
Thus, $\phi$ will be an extremal of $\mathcal{J}$ if and only if

\begin{equation}\label{ELe}
\displaystyle \sum_{\alpha=1}^k\ds\frac{\partial}{\partial
x^\alpha}\Big\vert_{x} \left(\frac{\displaystyle\partial
L}{\displaystyle
\partial v^i_\alpha}\Big\vert_{\phi^{(1)}(x)} \right)= \frac{\displaystyle \partial
L}{\displaystyle
\partial q^i}\Big\vert_{\phi^{(1)}(x)} \;.
\end{equation} The equations (\ref{ELe}) are called the  {\it
Euler-Lagrange field equations} for the Lagrangian $L\in\mathcal{C}^\infty(T^1_kQ)$.

\subsubsection{The intrinsic form of the Euler-Lagrange field equations}

We shall give an intrinsic  form of the  Euler-Lagrange field equations (\ref{ELe}), 
using the canonical $1$-form $\lambda\in T^1_k(\tkqh)$ introduced in Section \ref{chilambda}.

In order to do this we consider a map  $\psi:U_0\subset  \rk \to \tkqh $     locally given by  $\psi(x)=(\psi^i(x),\psi_i^\alpha(x))$,  then $\phi=\pi^k_Q \circ \psi : U_0\subset  \rk \to Q$ is a  map locally given by  $\phi(x)=(\psi^i(x))$.

Let us recall that the local coordinates on $T^1_k(\tkqh)$ are  $(q^i,p^\alpha_i,(v_\alpha)^i,(v_\alpha)^\beta_i)
$ then  the  first prolongation   $$\psi^{(1)} :U_0\subset  \rk \to  T^1_k(\tkqh)  $$    of $\psi$  
 has the local expression 
   \begin{equation}\label{psi0-1}
    \psi^{(1)}(x)=(\psi^i(x),\psi_i^\beta(x),  \derpar{\psi^i}{x^\alpha}\Big\vert_x
   , \derpar{\psi_i^\beta }{x^\alpha}\Big\vert_x  ) .
 \end{equation}

     \begin{prop}\label{lambda_EL}
     
     Let $L\in \mathcal{C}^\infty(T^1_kQ)$ be a Lagrangian function. 
   If $\psi\colon U_0\subset \mathbb{R}^k\to (T^1_k)^*Q$ satisfies \begin{equation}\label{first1} \left( \lambda -   (T^1_k\pi_Q^k)^*(dL)\right)  \left(\psi^{(1)}(x)\right)   = 0
   \end{equation} then $\phi=\pi^k_Q \circ \psi$ is a solution to the Euler-Lagrange field equations (\ref{ELe}).
  
   \end{prop}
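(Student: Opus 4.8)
The plan is to reduce the statement to an explicit computation in the coordinates $(q^i,p^\alpha_i,(v_\alpha)^i,(v_\alpha)^\beta_i)$ on $T^1_k(\tkqh)$, since both $\lambda$ and the pullback $(T^1_k\pi_Q^k)^*(dL)$ admit closed coordinate forms. First I would determine the coordinate expression of the prolonged projection $T^1_k\pi_Q^k\colon T^1_k(\tkqh)\to T^1_kQ$: since $\pi_Q^k$ reads $(q^i,p^\beta_i)\mapsto(q^i)$, its canonical prolongation applies $(\pi_Q^k)_*$ to each velocity, so it sends $(q^i,p^\beta_i,(v_\alpha)^i,(v_\alpha)^\beta_i)$ to $(q^i,(v_\alpha)^i)$, identifying the fibre coordinate $v^i_\alpha$ on $T^1_kQ$ with $(v_\alpha)^i$. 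Pulling back $dL=\derpar{L}{q^i}\,dq^i+\derpar{L}{v^i_\alpha}\,dv^i_\alpha$ then gives a $1$-form whose coefficients are the partials of $L$ precomposed with $T^1_k\pi_Q^k$. Subtracting it from the local expression \eqref{eq_local_lambda} of $\lambda$, I obtain
$$
\lambda-(T^1_k\pi_Q^k)^*(dL)=\Big((v_\alpha)^\alpha_i-\derpar{L}{q^i}\Big)\,dq^i+\Big(p^\alpha_i-\derpar{L}{v^i_\alpha}\Big)\,d(v_\alpha)^i,
$$
a $1$-form whose only nonzero components are along $dq^i$ and $d(v_\alpha)^i$ (the $dp^\beta_i$ and $d(v_\alpha)^\beta_i$ components are identically absent).

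The conceptual ingredient is the functoriality of the prolongation: by the chain rule, for the map $\psi$ and the projection $\pi_Q^k$ one has $T^1_k\pi_Q^k\circ\psi^{(1)}=(\pi_Q^k\circ\psi)^{(1)}=\phi^{(1)}$. Consequently, when the partials of $L$ appearing in the pullback are evaluated along $\psi^{(1)}(x)$, they are automatically evaluated at $\phi^{(1)}(x)$. I then evaluate the above $1$-form at the point $\psi^{(1)}(x)$, whose coordinates are read from \eqref{psi0-1}, namely $(v_\alpha)^i=\derpar{\psi^i}{x^\alpha}$, $p^\beta_i=\psi^\beta_i$ and $(v_\alpha)^\beta_i=\derpar{\psi^\beta_i}{x^\alpha}$, so that the trace term becomes $(v_\alpha)^\alpha_i=\sum_{\alpha=1}^k\derpar{\psi^\alpha_i}{x^\alpha}$. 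Requiring that this covector vanish forces each component to vanish; reading off the coefficients of $d(v_\alpha)^i$ and of $dq^i$ separately yields
$$
\psi^\alpha_i(x)=\derpar{L}{v^i_\alpha}\Big\vert_{\phi^{(1)}(x)},\qquad \sum_{\alpha=1}^k\derpar{\psi^\alpha_i}{x^\alpha}\Big\vert_x=\derpar{L}{q^i}\Big\vert_{\phi^{(1)}(x)}.
$$

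Finally I would combine the two relations: differentiating the first with respect to $x^\alpha$ and summing over $\alpha$ gives $\sum_\alpha\derpar{}{x^\alpha}\big(\derpar{L}{v^i_\alpha}\big\vert_{\phi^{(1)}(x)}\big)=\sum_\alpha\derpar{\psi^\alpha_i}{x^\alpha}$, and substituting the second relation into the right-hand side produces exactly the Euler--Lagrange field equations \eqref{ELe} for $\phi=\pi_Q^k\circ\psi$. I expect the step demanding the most care to be the index bookkeeping rather than any deep difficulty: one must recognise that the trace term $(v_\alpha)^\alpha_i$ in $\lambda$ (the diagonal sum over coinciding velocity and covelocity indices) is precisely what evaluates to the divergence $\sum_\alpha\partial_{x^\alpha}\psi^\alpha_i$, and one must interpret the vanishing of the $1$-form at a point componentwise against the full coframe $\{dq^i,dp^\beta_i,d(v_\alpha)^i,d(v_\alpha)^\beta_i\}$, so that the missing $dp^\beta_i$ and $d(v_\alpha)^\beta_i$ terms impose no constraint while the $d(v_\alpha)^i$ terms supply the momentum relation $\psi^\alpha_i=\partial L/\partial v^i_\alpha$ that closes the argument.
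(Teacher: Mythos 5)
Your proposal is correct and follows essentially the same route as the paper's proof: express $\lambda$ and $(T^1_k\pi_Q^k)^*(dL)$ in the induced coordinates, use $T^1_k\pi^k_Q\circ\psi^{(1)}=\phi^{(1)}$ to evaluate along $\psi^{(1)}(x)$, read off the componentwise equations $\psi^\alpha_i=\partial L/\partial v^i_\alpha\vert_{\phi^{(1)}(x)}$ and $\sum_\alpha\partial\psi^\alpha_i/\partial x^\alpha=\partial L/\partial q^i\vert_{\phi^{(1)}(x)}$, and combine them to obtain (\ref{ELe}). The only (welcome) difference is that you make explicit the final substitution step that the paper compresses into ``and therefore $\phi$ is a solution,'' and you correctly note that the absent $dp^\beta_i$ and $d(v_\alpha)^\beta_i$ components impose no conditions.
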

     \proof

First we compute $ (T^1_k\pi_Q^k)^*(dL)$.  Since  $T^1_k\pi^k_Q \colon   T^1_k (\tkqh)
    \lto    T^1_kQ$  is locally given by
 \begin{equation}\label{t1kp1}
   T^1_k\pi^k_Q \left(q^i,p^\alpha_i,(v_\alpha)^i,(v_\alpha)^\beta_i  \right)
=  (q^i,  (v_\alpha)^ i ),
\end{equation}
from    (\ref{psi0-1}) and  (\ref{t1kp1})   we have that
 \begin{equation}\label{t1kp}
    T^1_k\pi^k_Q \circ \psi^{(1)}= \phi^{(1)},
     \end{equation}    where $\phi = \pi^k_Q\circ \psi :U_0\subset  \rk \to Q$ is  locally given by $\phi(x)=(\psi^i(x))\, .$

        On the other hand
        \begin{equation}\label{hhelp7}
dL
=\derpar{L}{q^i}  dq^i    +    \derpar{L}{v^i_\alpha}  dv^i_\alpha,     \end{equation}
and from  (\ref{t1kp}) and    (\ref{hhelp7}),  we obtain
    \begin{equation}\label{hhelp8}
    \left[(T^1_k\pi_Q^k)^*(dL)\right]\left(\psi^{(1)}(x)\right) = \derpar{L}{q^i}\Big\vert_{\left(\phi^{(1)}(x)\right)} \,  \,  dq^i ( \psi^{(1)}(x)  )
   +    \derpar{L}{v^i_\alpha}\Big\vert_{\left(\phi^{(1)}(x)\right)}  \,  \,   d(v_\alpha)^i ( \psi^{(1)}(x)  ).
    \end{equation}

     From (\ref{eq_local_lambda})
we have
     \begin{equation}\label{lambchi}
\lambda\left(\psi^{(1)}(x)\right)=\left( \ds\sum_{\alpha=1}^k \derpar{\psi^\alpha_i}{x^\alpha}\Big\vert_{ x } \right)  \, dq^i(\psi^{(1)}(x))
   +   \psi_i^\alpha(x)\,  d(v_\alpha)^i (\psi^{(1)}(x))\,.
   \end{equation}
   
Now if $\psi$ satisfies (\ref{first1}), from      (\ref{hhelp8}) and (\ref{lambchi}), we deduce that $\psi$ is solution to the equations
$$
 \ds\sum_{\alpha=1}^k \derpar{\psi^\alpha_i}{x^\alpha}\Big\vert_{ x } =\derpar{L}{q^i}\Big\vert_{\left(\phi^{(1)}(x)\right)} , \quad  \psi_i^\alpha(x)= \derpar{L}{v^i_\alpha}\Big\vert_{\left(\phi^{(1)}(x)\right)},
$$
 and therefore $\phi $   is a solution to the  Euler-Lagrange field equations (\ref{ELe}).

The equations (\ref{first1}) will be called the \textit{intrinsic form of the Euler-Lagrange field equations}.

     \begin{remark} The converse of the Proposition \ref{lambda_EL} is not true in general. If  $\psi\colon U_0\subset \mathbb{R}^k\to (T^1_k)^*Q$ is a map such that $\phi=\pi^k_Q\circ \psi$ is a solution of the Euler-Lagrange field equations (\ref{ELe}), then $\psi$ satisfies the equation (\ref{first1}) if and only if $\psi=FL\circ \phi^{(1)}$.
     
     Grabowska in \cite{Grabowska} characterizes the  Euler-Lagrange equations   (\ref{ELe}) defining certain subset of $T^1_k(\tkqh))$, which is obtained 
 starting from  a map $\alpha:T^1_k(\tkqh)\to T^*(\tkq)$, generalization of the Tulczyjew isomorphism $\alpha_Q: T(T^*Q)\to T^*(TQ)$  defined in
\cite{T3}. For a  non autonomous  $L(x^\alpha,q^i,v^i_\alpha)$
     a similar proposition is given in \cite{RRSV}.
\end{remark}
\subsection{Implicit Euler-Lagrange field equations}

In order to describe the implicit Euler-Lagrange field equations it is necessary to consider the $k$-Pontryagin bundle $\mathcal{M}=T^1_kQ\oplus_Q(T^1_k)^*Q$. We now introduce the Hamilton-Pontryagin variational principle and then, we consider the intrinsic version of the implicit Euler-Lagrange field equation, in this case using the canonical form $\chi$.

In the last part of this subsection we consider one particular example: Navier's equations.

\subsubsection{The Hamilton-Pontryagin principle}

Using the canonical forms $\Theta^1_\mathcal{M}, \ldots, \Theta^k_\mathcal{M}$ and the generalized energy Lagrangian function, defined in Section \ref{subsec:Pontryagin}, we establish the Hamilton-Pontryagin principle for $k$-symplectic classical field theories. This principle is similar to the expression for the multisymplectic case \cite{VYL-2012}.

Consider  a Lagrangian function $L\in \mathcal{C}^\infty(T^1_kQ)$     with associated generalized energy function $E$.  We now define the Hamilton-Pontryagin  action functional 
\begin{equation}\label{def:Pontryagin_action}
 \begin{array}{rccl}
 \mathcal{S}\colon & \mathcal{C}^\infty_C(\mathbb{R}^k,\mathcal{M}) & \to &\mathbb{R}\\\noalign{\medskip}
  & \Psi & \mapsto &\mathcal{S}(\Psi)=\displaystyle\int_{\mathbb{R}^k} \Big(\Psi^*(\Theta^\alpha_{\mathcal{M}})\wedge d^{k-1}x^\alpha - \Psi^*(E)d^kx\Big)\, ,
 \end{array}
\end{equation}
where $\mathcal{C}^\infty_C(\mathbb{R}^k,\mathcal{M})$ is the set of maps $\Psi=(\phi^i,\phi^i_\alpha, \psi^\alpha_i)\colon K\subset U_0\subset \mathbb{R}^k\to \mathcal{M}=T^1_kQ\oplus_Q(T^1_k)^*Q$, with compact support $K$, defined on an open set $U_0$.

\begin{remark}\label{remark:psi}
Let us observe that $\Psi$ can be written as $\Psi=(pr_1\circ\psi,pr_2\circ\psi)$ where $pr_1\circ\Psi\colon K\subset U_0\subset \mathbb{R}^k\to T^1_kQ$ and $pr_2\circ\Psi\colon K\subset U_0\subset \mathbb{R}^k\to (T^1_k)^*Q$
\end{remark}

Employing local coordinates $(q^i,v^i_\alpha,p^\alpha_i)$ on $\mathcal{M}=T^1_kQ\oplus_Q (T^1_k)^*Q$, the action functional is denoted by
\[
\mathcal{S}({\rm v}_q,\nu_q) = \displaystyle\int_{\mathbb{R}^k} \left[
p^\alpha_i\left(\displaystyle\frac{\partial q^i}{\partial x^\alpha} - v^i_\alpha\right)+ L(q^i,v^i_\alpha)
\right]d^kx\,.
\]

Let us observe that in the case $k=1$ we obtain the Hamilton-Pontryagin action functional introduced in \cite{YM-2006b}.

We have defined the Hamilton-Pontryagin action functional in terms of the family of $1$-forms $\Theta^1_\mathcal{M}, \ldots, \Theta^k_\mathcal{M}$. By using the definitions (\ref{eq:local_can_form_M}) and (\ref{eq:energy}) we can rewrite the Hamilton-Pontryagin action functional as:
\[
\begin{array}{lcl}
\mathcal{S}(\Psi)&=&\mathcal{S}(pr_1\circ \Psi, pr_2\circ \Psi)
\\\noalign{\medskip} &=&\displaystyle\int_{\mathbb{R}^k}\left(<<(\tau^k_Q\circ pr_1\circ\Psi)^{(1)}, pr_2\circ\Psi>> - << pr_1\circ\Psi,pr_2\circ\Psi>> + L(pr_1\circ\Psi) \right)d^kx\,,
\end{array}
\]
where $\Psi=(pr_1\circ \Psi, pr_2\circ \Psi)\colon K\subset U_0\subset\mathbb{R}^k\to \mathcal{M}= T^1_kQ\oplus_Q(T^1_k)^*Q$.

A map $\Psi\in \mathcal{C}^\infty_C(\mathbb{R}^k,\mathcal{M})$ is an extremal of the above action if
\[
\displaystyle\frac{d}{ds}\Big\vert_{s=0}\mathcal{S}(\tau^1_k\tau_s\circ\Psi)=0\,,
\]for each flow $\tau_s\colon Q\to Q$ such that $\tau_s(q)=q$ for every $q\in pr^\mathcal{M}_Q(\Psi(\partial K))$. Since such a flow $\tau_s$ is generated by a vector field $Z\in\mathfrak{X}(Q)$   vanishising at all points of $pr^\mathcal{M}_Q( \Psi(\partial K))$, we can prove that $\Psi$ is an extremal of $\mathcal{S}$ if and only if
 \[
  \displaystyle\int_{\mathbb{R}^k} [\Psi^*(\mathcal{L}_{Z^1}E)]d^kx =0\,.
 \]
for all $Z$ satisfying the above conditions, where $Z^1$ is the complete lift of $Z$ to $\mathcal{M}$.

We now suppose that $\Psi=(\phi^i,\phi^i_\alpha, \psi^\alpha_i)$ satisfies  $pr_1\circ \Psi=(pr^\mathcal{M}_Q\circ \Psi)^{(1)}$ that is, in a local coordinate system,
\[\phi^i_\alpha(x)=\displaystyle\frac{\partial \phi^i}{\partial x^\alpha}\Big\vert_{x},\; 1\leq i\leq n, \makebox{ and } 1\leq \alpha\leq k .\]

Consider now the canonical coordinate systems such that $Z=Z^i\nicefrac{\partial}{\partial q^i}$; taking into account the local expression (\ref{local:prontryagin_lift}) for the complete lift $Z^1$, the local expression of the generalized energy (\ref{local:energy}) and that $\Psi(x)=(\phi^i(x), \phi^i_\alpha(x)=\nicefrac{\partial \phi^i}{\partial x^\alpha}, \psi^\alpha_i(x))$, we deduce that $\psi$ is an extremal of $\mathcal{S}$ if and only if
\[
\begin{array}{lcl}
 \displaystyle\int_{\mathbb{R}^k}Z^i(x)\left ( \displaystyle\sum_{\alpha=1}^k\displaystyle\frac{\partial\psi^\alpha_i}{\partial x^\alpha}\Big\vert_{x} - \displaystyle\frac{\partial L}{\partial q^i}\Big\vert_{pr_1(\Psi(x))}\right)d^kx &=& 0 \, ,
 \\\noalign{\medskip}
 \displaystyle\int_{\mathbb{R}^k} \left[\displaystyle\sum_{\alpha=1}^k\displaystyle\frac{\partial (Z^i\circ pr^{\mathcal{M}}_Q\circ \psi)}{\partial x^\alpha}\Big\vert_{x} \left(\psi^\alpha_i(x)-\displaystyle\frac{\partial L}{\partial v^i_\alpha}\Big\vert_{pr_1(\psi(x))}\right)\right]d^kx&=& 0  \, ,
\end{array}
\]for all  $Z^i$ and $\nicefrac{\partial Z^i}{\partial q^j}$, with $Z$ vanishing at all points of $pr^\mathcal{M}_Q( \Psi(\partial K))$. Thus, $\Psi$ will be an extremal of $\mathcal{S}$ if and only if
\begin{equation}\label{eq_Pontryagin_Lagrange_eq}
\phi^i_\alpha(x)=\displaystyle\frac{\partial \phi^i}{\partial x^\alpha}\Big\vert_{x},\quad \displaystyle\sum_{\alpha=1}^k\displaystyle\frac{\partial\psi^\alpha_i}{\partial x^\alpha}\Big\vert_{x} - \displaystyle\frac{\partial L}{\partial q^i}\Big\vert_{pr_1(\Psi(x))} =0 \quad , \quad \psi^\alpha_i(x)-\displaystyle\frac{\partial L}{\partial v^i_\alpha}\Big\vert_{pr_1(\Psi(x))}=0 \, .
\end{equation}

\noindent These equations are called \textit{the implicit Euler-Lagrange field equations.}
\begin{remark}\begin{enumerate}
  \item The first group of the equations (\ref{eq_Pontryagin_Lagrange_eq}) shows that,
 $pr_1(\Psi(x))=  \phi^{(1)}(x) $
where
$\phi :\rk \to \phi(x)=\tau^k_Q(\Psi(x)) =(\phi^i(x))\in Q$.

\item The last group of the equations (\ref{eq_Pontryagin_Lagrange_eq}) implies that, in the conditions of the above proposition,
$FL(pr_1\circ \Psi) = pr_2\circ \Psi\,.$

\item From (\ref{eq_Pontryagin_Lagrange_eq}) we deduce that $\phi$ is solution to the Euler-Lagrange equations
(\ref{ELe})
\end{enumerate}
\end{remark}

\subsubsection{The intrinsic form of the implicit Euler-Lagrange field equations}

We shall give an intrinsic  characterization of the  Euler-Lagrange equations (\ref{eq_Pontryagin_Lagrange_eq}), 
using the intrinsic $1$-form $\chi$ introduced in Section \ref{chilambda} .

Let  $\Psi:U_0\subset  \rk \to  \mathcal{M}$    be a map with local expression $ \Psi(x)=\left(\phi^i(x),\phi^i_\alpha(x),\psi^\alpha_i(x)\right)$ and    let  $\Psi^{(1)} :U_0\subset  \rk \to  T^1_k\mathcal{M}$ be  its first prolongation,   which is locally given by
   \begin{equation}\label{Psi0-1}
     \Psi^{(1)}(x)=\left(\phi^i(x),\phi^i_\alpha(x),\psi^\alpha_i(x), \derpar{\phi^i}{x^\beta} ,
      \derpar{\phi^i_\alpha}{x^\beta},    \derpar{\psi^\alpha_i}{x^\beta}    \right)  \, ,
 \end{equation}
 see (\ref{localfi11}) .

 \begin{prop}\label{tul_implicit}
Let $\Psi:\rk \to \mathcal{M}$ be a map, and $\Psi^{(1)}:\rk \to T^1_k\mathcal{M}$ its first prolongation. Then
 $\psi $ satisfies
\begin{equation}\label{justin0}
\left[ (T^1_kpr_2)^* \chi -(\tau^k_{\mathcal{M}})^*dE
 \right]
\left(\Psi^{(1)}(x)\right)=0
\end{equation}
if and only if $\Psi$ is solution to the  implicit Euler-Lagrange equations  (\ref{eq_Pontryagin_Lagrange_eq}).
\end{prop}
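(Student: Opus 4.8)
The plan is to prove the equivalence by computing both pullback forms explicitly in the induced coordinates on $T^1_k\mathcal{M}$ and comparing coefficients, exactly in the spirit of the proof of Proposition \ref{lambda_EL}. The strategy is entirely local: since the forms $(T^1_kpr_2)^*\chi$ and $(\tau^k_{\mathcal{M}})^*dE$ are globally defined $1$-forms on $T^1_k\mathcal{M}$, the statement (\ref{justin0}) is the vanishing of a single $1$-form along the section $\Psi^{(1)}$, and this vanishing is equivalent to the vanishing of each of its coordinate coefficients. So the entire proof reduces to identifying those coefficients with the three groups of equations in (\ref{eq_Pontryagin_Lagrange_eq}).

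First I would compute $(T^1_kpr_2)^*\chi$. The projection $pr_2\colon\mathcal{M}\to(T^1_k)^*Q$ has the obvious coordinate expression $pr_2(q^i,v^i_\alpha,p^\alpha_i)=(q^i,p^\alpha_i)$, so its prolongation $T^1_kpr_2\colon T^1_k\mathcal{M}\to T^1_k((T^1_k)^*Q)$ drops the velocity-fibre and its prolonged coordinates, sending the point with coordinates as in (\ref{Psi0-1}) to $(q^i,p^\alpha_i,(v_\alpha)^i,(v_\alpha)^\beta_i)$ where the images of $(v_\alpha)^i$ and $(v_\alpha)^\beta_i$ are $\partial\phi^i/\partial x^\alpha$ and $\partial\psi^\beta_i/\partial x^\alpha$ respectively. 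Feeding the local formula (\ref{eq:local_chi}) for $\chi$ through this pullback and evaluating at $\Psi^{(1)}(x)$, the term $(v_\alpha)^i\,dp^\alpha_i$ produces the factor $\partial\phi^i/\partial x^\alpha$ multiplying $dp^\alpha_i$, while $-(v_\alpha)^\alpha_i\,dq^i$ produces $-\big(\sum_\alpha\partial\psi^\alpha_i/\partial x^\alpha\big)\,dq^i$; note that the prolonged coordinate $(v_\alpha)^\alpha_i$ pulls back to the derivative of the $p$-component. Next I would compute $(\tau^k_{\mathcal{M}})^*dE$: from the energy expression (\ref{local:energy}), $dE=(p^\alpha_i\,dv^i_\alpha + v^i_\alpha\,dp^\alpha_i) - (\partial L/\partial q^i\,dq^i + \partial L/\partial v^i_\alpha\,dv^i_\alpha)$, and pulling back along $\tau^k_{\mathcal{M}}$ (which simply forgets the prolonged fibre coordinates) and evaluating at $\Psi^{(1)}(x)$ yields a $1$-form in $dq^i$, $dv^i_\alpha$, $dp^\alpha_i$ with coefficients evaluated at $\Psi(x)$.

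Subtracting the two and grouping by the independent differentials $dq^i$, $dv^i_\alpha=d\phi^i_\alpha$, and $dp^\alpha_i$, the coefficient of $dp^\alpha_i$ gives $\phi^i_\alpha=\partial\phi^i/\partial x^\alpha$ (the holonomy condition), the coefficient of $dv^i_\alpha$ gives $\psi^\alpha_i=\partial L/\partial v^i_\alpha|_{pr_1(\Psi(x))}$, and the coefficient of $dq^i$ gives $\sum_\alpha\partial\psi^\alpha_i/\partial x^\alpha - \partial L/\partial q^i|_{pr_1(\Psi(x))}=0$. These are precisely the three groups in (\ref{eq_Pontryagin_Lagrange_eq}), and since the differentials $dq^i,dv^i_\alpha,dp^\alpha_i$ are pointwise linearly independent, the $1$-form vanishes if and only if all three sets of equations hold, which establishes the equivalence.

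The only genuinely delicate point — and the step I would treat most carefully — is the bookkeeping of the prolonged coordinates under $T^1_kpr_2$, in particular confirming that the $(v_\alpha)^\alpha_i$ appearing in $\chi$ pulls back to $\partial\psi^\alpha_i/\partial x^\alpha$ (the divergence of the momenta) rather than to a velocity derivative; this is where the notation $(v_\alpha)^\beta_i$ for the prolonged cotangent coordinates, and the diagonal index matching $\beta=\alpha$ in the summation, must be tracked exactly as in (\ref{eq:local_chi}) and (\ref{eq_local_lambda}). Once that index identification is pinned down the rest is a direct coefficient comparison, so I expect no further obstacle.
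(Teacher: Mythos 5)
Your proposal is correct and follows essentially the same route as the paper's proof: both compute the local coordinate expressions of $(T^1_kpr_2)^*\chi$ and $(\tau^k_{\mathcal{M}})^*dE$ along $\Psi^{(1)}$ (using the coordinate formulas for $T^1_kpr_2$ and $\tau^k_{\mathcal{M}}$) and then compare coefficients of the independent differentials $dq^i$, $dv^i_\alpha$, $dp^\alpha_i$ to recover the three groups of equations in (\ref{eq_Pontryagin_Lagrange_eq}). The index point you flag as delicate — that $(v_\alpha)^\alpha_i$ pulls back along $\Psi^{(1)}$ to the momentum divergence $\sum_{\alpha}\partial\psi^\alpha_i/\partial x^\alpha$ — is handled identically in the paper's computation (\ref{0or1ksim}).
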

 
\proof

Since
$T^1_kpr_2 : T^1_k\mathcal{M}    \to   T^1_k(\tkqh)$  is locally given by

 $$T^1_kpr_2 \left(q^i,v^i_\alpha,p^\alpha_i,(v_\alpha)^i,(v_\alpha)^i_\beta,(v_\alpha)^\beta_i\right) =  \left(q^i,p^\alpha_i,(v_\alpha)^i,(v_\alpha)^\beta_i\right)
 $$
we obtain  from   the local expressions        (\ref{eq:local_chi})   and    (\ref{Psi0-1})   that

\begin{equation}\label{0or1ksim}
\begin{array}{ccl}
(T^1_kpr_2)^* \chi\left(\Psi^{(1)}(x)\right)
 =   \derpar{\phi^i}{x^\alpha}\Big\vert_{x}\,dp^\alpha_i(\Psi^{(1)}(x))\, -
   \derpar{\psi^\alpha_i}{x^\alpha}\Big\vert_{x}\, dq^i(\Psi^{(1)}(x))\, .
    \end{array}
\end{equation}

On the other hand, since the canonical projection
$\tau^k_{\mathcal{M}}\colon T^1_k\mathcal{M}  \to   \mathcal{M}$ is locally given by
\begin{equation}\label{or1ksim}
   \tau^k_{\mathcal{M}}\left(q^i,v^i_\alpha, p^\alpha_i,(v_\alpha)^i,(v_\alpha)^i_\beta,(v_\alpha)^\beta_i\right)  =(q^i,v^i_\alpha,p^\alpha_i) ,
\end{equation}  we deduce from (\ref{local:energy}) 
that
\begin{equation}\label{or1ksim2}
\begin{array}{rl}
(\tau^k_{\mathcal{M}  } )^*(dE)(\Psi^{(1)}(x))=  &-  \derpar{L}{q^i}\Big\vert_{pr_1(\Psi(x))}d q^i (\Psi^{(1)}(x))\\\noalign{\medskip} &+(
 \psi^\alpha_i (x) -\derpar{L}{v^i_\alpha}\Big\vert_{pr_1(\Psi(x))} )\, d v^i_\alpha (\Psi^{(1)}(x))
 +      \phi^i_\alpha(x)     \, d p^\alpha_i (\Psi^{(1)}(x))\,.
 \end{array}
 \end{equation}

From      (\ref{0or1ksim}) and     (\ref{or1ksim2})  we deduce that  $\Psi$ is solution to (\ref{justin0})
 if and only if   $    \Psi$
is solution to the implicit Euler-Lagrange equations (\ref{eq_Pontryagin_Lagrange_eq}).
\qed
 The equations (\ref{justin0}) are called \textit{the intrinsic characterization of the implicit Euler-Lagrange field equations.}
\begin{remark}
The above proposition is a generalization of   Proposition 3.3 in \cite{YM-2006b}.
\end{remark}

%
%
%

\subsubsection{Example: Navier's equations}

\paragraph {A. Navier's equations  \cite{olver}}\

The deformations of an elastic body  $\Omega \subset \r^n$
 are described by the
displacement field $\phi\colon  \Omega \to \r^n$.
  Each material point   $ x\in \Omega$ in the undeformed body will
move to a new position $y=x+\phi(x)$ in the derformed body.

The one-dimensional case governs bars, beams and rods, two-dimensional bodies include
thin plates and shells, while $n = 3$ for fully three-dimensional solid bodies.

%
%

The simplest case is that of a homogeneous
and isotropic planar body $\Omega \subset \r^2$,
being  the deformation function (field)
 $$\phi  \colon (x^1,x^2) \in \Omega  \to   \phi(x^1,x^2) =  ( \phi^1(x^1,x^2),  \phi^2(x^1,x^2))\in Q\equiv \r^2  \, . $$

 Navier's equations  for $\phi$ are
\begin{equation} \label{eq _navier}\begin{array}{lcl}
(\lambda+2\mu) \partial_{11}\phi^1   +
(\lambda+\mu)\partial_{12}\phi^2  +
\mu \partial_{22}\phi^1  &=&0\,,\\\noalign{\medskip}
\mu \partial_{11}\phi^2 +(\lambda+\mu)\partial_{12}\phi^1
+ (\lambda+2\mu) \partial_{22}\phi^2   &=&0\,,
\end{array} \end{equation}
where the parameters $\lambda,\, \mu$ are known as the {\it Lam\'e moduli} of the material, and govern its intrinsic
elastic properties, see \cite{Antman}  and \cite{Gurtin}  for
details and physical derivations. In the above equations we use the notation $\partial_{\alpha\beta}\phi^i=\partial^2\phi^i/\partial x^\alpha\partial x^\beta$.

  Navier's equations can seen as a particular case of the  Euler-Lagrange equations 
 (\ref{ELe}) for the    Lagrangian  $L:T^1_2\r^2=T\r^2\oplus_{\r^2 }T\r^2\to \r$ given by
    \begin{equation}\label{lagrangian_navier}
L  (q^1,q^2, v^1_1,v^2_1,v^1_2, v^2_2)=
(\ds\frac{1}{2}\lambda + \mu) \left[(v_1^1)^2+ (v_2^2)^2\right]+
\ds\frac{1}{2}\mu \left[(v^1_2)^2 + (v^2_1)^2\right] +
(\lambda+\mu)v^1_1v^2_2 \, ,\end{equation}
with $(q^1,q^2, v^1_1,v^2_1,v^1_2, v^2_2)$  coordinates on $T^1_2\r^2$.

 \paragraph{B.    Implicit equations}\

Now we shall write Navier's equations (\ref{eq _navier}) using the intrinsic equation (\ref{justin0})  with the  Lagrangian
  (\ref{lagrangian_navier}).


 Employing local coordinates
 $(q^1,q^2,v^1_1,  v^2_1 ,   v^1_2,  v^2_2, p^1_1, p^1_2, p^2_1, p^2_2)$
  on $\mathcal{M}=T^1_k\r^2\oplus_{\r^2} (T^1_k)^*\r^2$, each map  $\Psi\colon  U_0\subset \mathbb{R}^2\to  \mathcal{M} $  will be locally  written
as follows
$$\Psi=
(\phi^1,\phi^2,\phi^1_1,  \phi^2_1 ,   \phi^1_2,  \phi^2_2, \psi^1_1, \psi^1_2, \psi^2_1,\psi^2_2) \, .
$$

From (\ref{eq:local_chi}), we know that    $\chi =  - ((v_1)^1_i+  (v_2)^2_i  ) \,\, dq^i  +   (v_1)^i\, \,  dp^1_i + (v_2)^i\, \,  dp^2_i
 $,
 and we have in this case
 \begin{equation}\label{0or1ksim0}
 \begin{array}{rl}
(T^1_kpr_2)^* \chi\left(\Psi^{(1)}(x)\right)
 =  &  - \left( \derpar{\psi^1_i}{x^1}\Big\vert_{x} +\derpar{\psi^2_i}{x^2}\Big\vert_{x}\right)\, dq^i(\Psi^{(1)}(x))\\\noalign{\medskip}
 & + \derpar{\phi^i}{x^1}\Big\vert_{x}\,dp^1_i(\Psi^{(1)}(x))\, +   \derpar{\phi^i}{x^2}\Big\vert_{x}\,dp^2_i(\Psi^{(1)}(x))\,.\end{array}
   \end{equation}
   
Since in this example  
 $dE=
(p^1_i- \ds\frac{\partial L}{\partial v^i_1})dv^i_1 + (p^2_i- \ds\frac{\partial L}{\partial v^i_2})dv^i_2 +
v^i_1 dp^1_i+
v^i_2 dp^2_i\, ,
$
we have that
\begin{equation}\label{deejemplo}
\begin{array}{rl}
 (\tau^k_{ \mathcal{M}} )^*(dE)(\Psi^{(1)}(x))= & 
\left( \psi^\alpha_i (x) -\derpar{L}{v^i_\alpha}   \Big\vert_{pr_1(\Psi(x))} \right)\, d v^i_\alpha (\Psi^{(1)}(x))\\\noalign{\medskip}
 &+      \phi^i_1(x)     \, d p^1_i (\Psi^{(1)}(x))  + \phi^i_2(x)     \, d p^2_i (\Psi^{(1)}(x))\, .
 \end{array}
 \end{equation}
 
Now, from (\ref{lagrangian_navier}), we obtain    $$ \derpar{L}{v^1_1}= ( \lambda+2\mu)v^1_1 + (\lambda+\mu)v^2_2,
\, ,\quad
 \derpar{L}{v^1_2} = \mu v^1_2\, ,
   \quad    \derpar{L}{v^2_1}= \mu v^2_1 \, , \quad          \derpar{L}{v^2_2}=(\lambda+2\mu) v^2_2 + (\lambda+\mu)v^1_1\, ,$$
and, from (\ref{0or1ksim0}) and   (\ref{deejemplo}), we deduce that  the implicit field equations in this example are
\begin{equation}\label{pont_Navier}\begin{array}{c}
 \phi^1_\alpha= \derpar{\phi^1}{x^1}\, ,\quad   \phi^2_\alpha= \derpar{\phi^2}{x^\alpha}\, ,\quad 0=  \displaystyle\frac{\partial \psi^1_1}{\partial x^1}+\displaystyle\frac{\partial \psi^2_1}{\partial x^2}, \quad
 0
  = \displaystyle\frac{\partial \psi^1_2}{\partial x^1}+\displaystyle\frac{\partial \psi^2_2}{\partial x^2}
\\ \noalign{\medskip}
\psi^1_1=(\lambda+2\mu)\phi^1_1+(\lambda+\mu)\phi^2_2,  \quad
\psi^1_2= \mu \phi^2_1\, ,\\ \noalign{\medskip} \psi^2_1=\mu\phi^1_2,\quad
 \psi^2_2=(\lambda+2\mu)\phi^2_2+(\lambda+\mu)\phi^1_1 \, .\end{array}\end{equation}
Finally from (\ref{pont_Navier})  we obtain

 $$\begin{array}{l}
 0=  \displaystyle\frac{\partial \psi^1_1}{\partial x^\alpha}+\displaystyle\frac{\partial \psi^2_1}{\partial x^2}=
 (\lambda+2\mu)   \derpar{^2\phi^1}{(x^1)^2}   + (\lambda+\mu)\ds\frac{\partial \phi^2}{\partial x^1 \partial x^2}
 +\mu  \derpar{^2\phi^1}{(x^2)^2}\, ,
 \\ \noalign{\medskip}
 0
  = \displaystyle\frac{\partial \psi^1_2}{\partial x^1}+\displaystyle\frac{\partial \psi^2_2}{\partial x^2}
  = \mu   \derpar{^2\phi^2}{(x^1)^2} + (\lambda+2\mu)  \derpar{^2\phi^2}{(x^2)^2}
  +  (\lambda+\mu)  \ds\frac{\partial \phi^1}{\partial x^1 \partial x^2}\, ,
\end{array}
$$
which give   the Navier's equations   (\ref{eq _navier}) .

\subsection{Non-holonomic implicit Euler-Lagrange field equations}\label{subsec:constraints}

We now consider the implicit Euler-Lagrange field equations with non-holonomic constraints. The results of the above section can be considered a particular case of this section.
\subsubsection{The Lagrange-D'Alembert-Pontryagin principle}
In this section we study the case in which a regular constraint distribution is given. To do this, we introduce an extended Lagrange-D'Alembert principle called the  Lagrange-D'Alembert-Pontryagin principle.

In first place we describe this situation and we introduce the necessary geometric elements, a complete description of this elements can be found in \cite{LMSV-2008}.

We consider a field theory built on the following geometric objects:
\begin{itemize}
\item A Lagrangian function $L\colon T^1_kQ\to \mathbb{R}$.
\item A constraint submanifold $\mathcal{N}\hookrightarrow T^1_kQ$, which can be locally represented by equations of the form $\Phi_A(q^i,v^i_\alpha)=0$ for $A=1, \ldots, m$. This submanifold represents some external constraints imposed on the system. For the sake of clarify we will confine ourselves to the case that $\mathcal{N}$ projects onto the whole of $Q$ and the restriction $\tau^k_Q\vert_\mathcal{N}\colon \mathcal{N}\to Q$ is a fibre bundle.
\item A bundle $\mathcal{F}$ of constraints forms, defined along $\mathcal{N}$, where $\mathcal{F}$ is generated by the $m$ independent semi-basic $\mathbb{R}^k$-valued one-forms $\eta_1,\ldots, \eta_m$ that locally read \begin{equation}\label{exp:constraints}
\eta_A=(\eta^1_A, \ldots, \eta^k_A)=((\eta^1_{A})_idq^i,\ldots, (\eta^k_{A})_idq^i)\,,
\end{equation}
for some smooth function $(\eta^\alpha_A)_i$ on $\mathcal{N}\subset T^1_kQ$. The independence of the forms $\eta_A$ clearly implies that the $m\times kn$-matrix whose elements are the functions $(\eta^\alpha_A)_i$, has constant maximal rank $m$.
\end{itemize}

We now have to specify the field equations. Proceeding as in the case of unconstrained $k$-symplectic field theories, we consider the following definition:
\begin{definition}\label{def:LDP_problem}
Let $L\in \mathcal{C}^\infty(T^1_kQ)$ be a Lagrangian function,
$\mathcal{N}\hookrightarrow T^1_kQ$ a constraint submanifold, $\mathcal{F}$ the bundle of constraints forms defined along $\mathcal{N}$. If   $\Psi\in \mathcal{C}^\infty_C(\mathbb{R}^k,\mathcal{M})$ is a map, with compact support $K$, defined onto a open set $U_0$ such that $pr_1\circ \Psi=(pr^\mathcal{M}_Q\circ\Psi)^{(1)}$ and $(pr_1\circ\Psi)(U_0)\subset\mathcal{N}$, then $\Psi$ is a solution of the \textit{Lagrange-D'Alembert-Pontryagin problem} if
for each $Z\in \mathfrak{X}(Q)$ which vanishes on  $(pr^\mathcal{M}_Q\circ\Psi)(\partial K)$ and such that $\iota_{Z^C}\eta=0$ for all $\eta$ of the bundle of constraints forms $\mathcal{F}$, we have
\[
\displaystyle\int_{\mathbb{R}^k}[\Psi^*(\mathcal{L}_{Z^1}E)]d^kx=0\,,\]
where $Z^C$ and $Z^1$ are the complete lift of $Z$ to $T^1_kQ$ and $\mathcal{M}=T^1_kQ\oplus_Q(T^1_k)^*Q$ respectively.
\end{definition}

Let $L\in \mathcal{C}^\infty(T^1_kQ)$ be a Lagrangian function,
$\mathcal{N}\hookrightarrow T^1_kQ$ a constraint submanifold and $\mathcal{F}$ the bundle of constraints forms defined along $\mathcal{N}$ and  $\psi\in \mathcal{C}^\infty_C(\mathbb{R}^k,\mathcal{M})$ a map with compact support $K$, defined onto a open set $U_0$ and such that $pr_1\circ \psi=(pr^\mathcal{M}_Q\circ\psi)^{(1)}$ and $(pr_1\circ\psi)(U_0)\subset\mathcal{N}$. 
It is easy to prove that $\psi$ is a solution of the  Lagrange-D'Alembert-Pontryagin problem if and only if
\[
\begin{array}{lcl}
 \displaystyle\int_{\mathbb{R}^k}Z^i(x)\left ( \displaystyle\sum_{\alpha=1}^k\displaystyle\frac{\partial\psi^\alpha_i}{\partial x^\alpha}\Big\vert_{x} - \displaystyle\frac{\partial L}{\partial q^i}\Big\vert_{pr_1(\psi(x))}\right)d^kx &=& 0 \, ,
 \\\noalign{\medskip}
 \displaystyle\int_{\mathbb{R}^k} \left[\displaystyle\sum_{\alpha=1}^k\displaystyle\frac{\partial Z^i}{\partial q^j}\Big\vert_{x} \displaystyle\frac{\partial \phi^j}{\partial x^\alpha}\Big\vert_{x} \left(\psi^\alpha_i(x)-\displaystyle\frac{\partial L}{\partial v^i_\alpha}\Big\vert_{pr_1(\psi(x))}\right)\right]d^kx&=& 0\, ,
\end{array}
\]for all $Z\in \mathfrak{X}(Q)$ satisfying $\iota_{Z^C}\eta=0$ for all $\eta\in \mathcal{F}$  and, thus for any values $Z^i$ and $\nicefrac{\partial Z^i}{\partial q^j}$ such that
\[(\eta^\alpha_A)_iZ^i=0,\; 1\leq A\leq m,\; 1\leq \alpha \leq k.\]

Thus, $\psi$ is solution of the  Lagrange-D'Alembert-Pontryagin problem if and only if $\psi=(\phi^i,\phi^i_\alpha, \psi^\alpha_i)$ is a solution to the following systems of partial differential equations:
\begin{equation}\label{eq:constraint_implecit_EL_eq}
\begin{array}{l}
\displaystyle\frac{\partial \phi^i}{\partial x^\alpha}\Big\vert_{x} = \phi^i_\alpha(x)\, , \quad \psi^\alpha_i(x)=\displaystyle\frac{\partial L}{\partial v^i_\alpha}\Big\vert_{pr_1(\psi(x))}\, ,\quad \quad \Phi_A((pr^\mathcal{M}_Q\circ\psi)^{(1)})(x))=0 \, , \\\noalign{\medskip}
\displaystyle\sum_{\alpha=1}^k\displaystyle\frac{\partial \psi^\alpha_i}{\partial x^\alpha}\Big\vert_{x} - \displaystyle\frac{\partial L}{\partial q^i}\Big\vert_{pr_1(\psi(x))}=\lambda^A_\alpha (\eta^\alpha_{A})_i(pr_1(\psi(x))) \,.
\end{array}
\end{equation}
with $1\leq i\leq n,\, 1\leq \alpha\leq k, \, 1\leq A\leq m$. Here the functions $\lambda^A_\alpha$ play the role of the Lagrange multipliers.

The equations (\ref{eq:constraint_implecit_EL_eq}) are called \textit{the nonholonomic implicit Euler-Lagrange field equations}.

\subsubsection{The intrinsic form}
 
 We shall give an intrinsic  characterization of the non-holonomic implicit Euler-Lagrange equations (\ref{eq:constraint_implecit_EL_eq}), 
using the intrinsic $1$-form $\chi$ introduced in Section \ref{chilambda}  and the constraint  forms $\eta_1,\ldots, \eta_k$.

 \begin{prop}\label{tul_implicit-pontri}
Let $\Psi:\rk \to \tm$ be a map, such that  $\Phi_A((pr^\mathcal{M}_Q\circ\Psi)^{(1)})(x))=0$. Then
 $\Psi $ satisfies
\begin{equation}\label{justin00}
\left[  (T^1_kpr_2)^* \chi  -
 (\tau^k_{\mathcal{M}})^* \left[ dE  -(pr_1)^*( \lambda^A_\alpha \eta^\alpha_{A})  \right]     \right]\left(\Psi^{(1)}(x)\right)=0
\end{equation}
if and only if $\Psi$ is solution to    (\ref{eq:constraint_implecit_EL_eq})
\end{prop}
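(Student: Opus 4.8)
The plan is to follow the same strategy as in the proof of Proposition \ref{tul_implicit}, since the non-holonomic equations (\ref{eq:constraint_implecit_EL_eq}) differ from the unconstrained ones (\ref{eq_Pontryagin_Lagrange_eq}) only by the addition of the Lagrange-multiplier term $\lambda^A_\alpha(\eta^\alpha_A)_i$ on the right-hand side of the last group of equations, together with the constraint condition $\Phi_A=0$, which is already imposed as a hypothesis. The first two ingredients, namely the local expression of $(T^1_kpr_2)^*\chi$ evaluated at $\Psi^{(1)}(x)$ (equation (\ref{0or1ksim})) and that of $(\tau^k_{\mathcal{M}})^*(dE)$ evaluated at $\Psi^{(1)}(x)$ (equation (\ref{or1ksim2})), carry over verbatim, so I would reuse them directly.

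The only genuinely new computation is the contribution of the term $(\tau^k_{\mathcal{M}})^*(pr_1)^*(\lambda^A_\alpha\eta^\alpha_A)$. Since $pr_1(q^i,v^i_\alpha,p^\alpha_i)=(q^i,v^i_\alpha)$ and each constraint form reads $\eta^\alpha_A=(\eta^\alpha_A)_i\,dq^i$ by (\ref{exp:constraints}), the pull-back $(pr_1)^*(\lambda^A_\alpha\eta^\alpha_A)$ is again a multiple of $dq^i$; applying $(\tau^k_{\mathcal{M}})^*$ and evaluating at $\Psi^{(1)}(x)$ gives
\[
\left[(\tau^k_{\mathcal{M}})^*(pr_1)^*(\lambda^A_\alpha\eta^\alpha_A)\right]\left(\Psi^{(1)}(x)\right)=\lambda^A_\alpha\,(\eta^\alpha_A)_i(pr_1(\Psi(x)))\,dq^i(\Psi^{(1)}(x))\,.
\]
Thus this term affects only the coefficient of $dq^i$ and leaves the $dv^i_\alpha$ and $dp^\alpha_i$ parts untouched.

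Next I would substitute the three pieces into the bracket of (\ref{justin00}) and collect the coefficients of the one-forms $dq^i$, $dv^i_\alpha$ and $dp^\alpha_i$, which are linearly independent as pull-backs of the coordinate differentials on $\mathcal{M}$, exactly as in the previous proof. Vanishing of the whole one-form at $\Psi^{(1)}(x)$ is then equivalent to the simultaneous vanishing of these three coefficients. The $dp^\alpha_i$-coefficient yields $\phi^i_\alpha=\derpar{\phi^i}{x^\alpha}$, the $dv^i_\alpha$-coefficient yields $\psi^\alpha_i=\derpar{L}{v^i_\alpha}$, and the $dq^i$-coefficient, now including the new term, yields $\ds\sum_{\alpha=1}^k\derpar{\psi^\alpha_i}{x^\alpha}-\derpar{L}{q^i}=\lambda^A_\alpha(\eta^\alpha_A)_i$. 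Together with the assumed constraint $\Phi_A((pr^\mathcal{M}_Q\circ\Psi)^{(1)}(x))=0$, these are precisely equations (\ref{eq:constraint_implecit_EL_eq}).

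Since each step is a reversible coefficient-by-coefficient equivalence, the argument establishes both implications simultaneously. I do not expect a serious obstacle; the only point requiring care is the bookkeeping of evaluation points, in particular checking that $(pr_1)^*$ places $\lambda^A_\alpha$ and $(\eta^\alpha_A)_i$ as functions on $\mathcal{M}$ restricted along $(pr_1\circ\Psi)(U_0)\subset\mathcal{N}$, which matches the form of (\ref{eq:constraint_implecit_EL_eq}) where they appear evaluated at $pr_1(\Psi(x))$.
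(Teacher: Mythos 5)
Your proposal is correct and follows exactly the route the paper intends: the paper's own proof of Proposition \ref{tul_implicit-pontri} is simply the remark that it is ``similar to the proof of Propositions \ref{lambda_EL} and \ref{tul_implicit}'', and your argument carries out precisely that program, reusing (\ref{0or1ksim}) and (\ref{or1ksim2}) and correctly computing the extra semibasic term $\lambda^A_\alpha(\eta^\alpha_A)_i(pr_1(\Psi(x)))\,dq^i(\Psi^{(1)}(x))$, which only shifts the $dq^i$-coefficient. The coefficient-by-coefficient comparison then yields the equivalence with (\ref{eq:constraint_implecit_EL_eq}), the constraint $\Phi_A=0$ being supplied by the hypothesis, so nothing is missing.
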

\proof It is similar to the proof of Propositions \ref{lambda_EL} and \ref{tul_implicit}.

\qed

\subsubsection{Examples}\label{examples}

In this part of the paper we consider two examples of the nonholonomic implicit Euler-Lagrange equations. The first one in the non-holonomic Cosserat rod. This example is an example of second  order field theory, but it is possible to consider a description of this example with an associated first-order Lagrangian function. The second one is the particular case of linear constraints induced by a family of distributions on Q.

\paragraph{A. The nonholonomic Cosserat rod}\

The nonholonomic Cosserat rod \cite{ST-2007,Van-2012,VM-2008} is an example of a nonholonomic field theory which describes the motion of a rod  which is constrained to roll without sliding on a horizontal surface.

\begin{figure}[h!]
\centering
\includegraphics[scale=0.45]{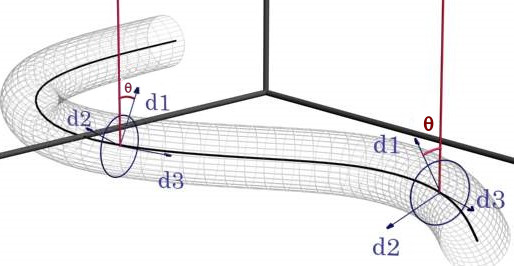}
\caption{Cosserat rod}
\end{figure}

A Cosserat rod can be thought of as a long and thin deformable body. We assume that its length is significantly larger that its radius. A Cosserat rod can be visualized as a curve $s\to r(s)\in\mathbb{R}^3$, called \textit{centerline}, to which is attached a frame $\{\mathbf{d}_1(s), \mathbf{d}_2(s), \mathbf{d}_3(s)\}$ called \textit{director frame}. In this orthonormal basis, the vector $\mathbf{d}_3(s)$ is constrained to be parallel to $r'(s)$.

We consider an inextensible Cosserat rod of lenght $l$. Is we denote the centerline at $t$ as $s\to \mathbf{r}(t,s)$, inextensibility allows us to assume that the parameter $s$ is
the arc length.

The nonholonomic second-order model of the Cosserat rod is described into the multisymplectic framework. The complete description can be found in \cite{Van-2012}.  In \cite{LMSV-2008} we modified this model by a lowering process to obtain a first-order Lagrangian function. In this model we consider the Lagrangian function
\begin{equation}\label{Cosserat_lagrangian}
L=\displaystyle\frac{\rho}{2}(\dot{x}^2 + \dot{y}^2) + \displaystyle\frac{\alpha}{2}\dot{\theta}^2 -\displaystyle\frac{1}{2}(\beta(\theta ')^2 + K((z')^2+ (v')^2)) + \lambda(z-x') + \mu (v-y')\,.
\end{equation}

Here $\rho, \alpha, \beta$ and $K$ are real parameters, $(x(t,s),y(t,s))$ are the coordinates of the centerline, $\theta(t,s)$ is the torsion angle, $\dot{x}=\partial x/\partial t$, $x'=\partial x/\partial s$ (analogous for $y$ and $\theta$) and $\lambda $ and $\mu$ are Lagrange multipliers associated to the constraint $z=x'$ and $v=y'$. The constraints are given by
\begin{equation}\label{constraints}
\dot{x}+R\dot{\theta}y'=0\quad  \makebox{and} \quad \dot{y}-R\dot{\theta}x'=0\,,
\end{equation} where  $R$ is another real parameter.

The Lagrangian (\ref{Cosserat_lagrangian}) can be thought as a mapping defined on $T^1_2Q$ where $Q= \mathbb{R}^2\times \mathbb{S}^1\times \mathbb{R}^4\equiv \mathbb{R}^7$.  If we rewrite this Lagrangian with the notation of Section \ref{subsec:k-tangent} we obtain that the first order Lagrangian $L\colon T^1_2\mathbb{R}^7\to \mathbb{R}$  is given by
\[
L=\displaystyle\frac{\rho}{2}((v^1_1)^2 + (v^2_1)^2) + \displaystyle\frac{\alpha}{2}(v^3_1)^2 -\displaystyle\frac{1}{2}(\beta(v^3_2)^2 + K((v^4_2)^2+ (v^5_2)^2)) + q^6(q^4-v^1_2) + q^7 (q^5-v^2_2)\, ,
\]
subject to constraints
\[
v^1_1 + Rv^3_1v^2_2=0\quad \makebox{and}\quad v^2_1-Rv^3_1v^1_2=0\,.
\]

Let us observe that this Lagrangian is a singular Lagrangian. In this case, the constraint submanifold is the set
\[
\mathcal{N}=\{{\rm v}_q=({v_1}_q, {v_2}_q)\in T^1_2\mathbb{R}^7 / \Phi_1({\rm v}_q)=v^1_1+Rv^3_1v^2_1=0 \quad \makebox{and}\quad \Phi_2({\rm v}_q)=v^2_1-Rv^3_1v^1_2=0\}\,,
\]
where ${v_\alpha}_q=\sum_{i=1}^7v_\alpha^i\partial/\partial q^i$. Therefore, the bundle of constraints forms $\mathcal{F}$ is generated by the $2$ $\mathbb{R}^2$-valued $1$-forms
\[
\eta_1=(\eta_1^1, \eta_1^2)=(dq^1 + Rv^2_2dq^3, 0) \quad \makebox{ and }\quad \eta_2=(\eta_2^1,\eta_2^2)=(dq^2-Rv^1_2dq^3, 0)\,.
\]

A solution of the Lagrange-D'Alembert-Pontryagin problem in this case is a map
\[
\begin{array}{rccl}
\Psi\colon & \mathbb{R}^2 & \to & \mathcal{M}=T^1_2\mathbb{R}^7\oplus_{\mathbb{R}^7}(T^1_2)^*\mathbb{R}^7\\\noalign{\medskip}
 & x=(t,s) & \mapsto & (\phi^i(x) , \phi^i_1(x), \phi^i_2(x), \psi^1_i(x),\psi^2_i(x)) \, ,
\end{array}
\] with $1\leq i\leq 7$, which satisfies the following system of partial differential equations:
\begin{equation}\label{cosserat_eq}
\begin{array}{c}
\rho \displaystyle\frac{\partial^2 \phi^1}{\partial t\partial t} - \displaystyle\frac{\partial \phi^6}{\partial s}= \nu_1, \quad \rho \displaystyle\frac{\partial^2 \phi^2}{\partial t\partial t} - \displaystyle\frac{\partial \phi^7}{\partial s}= \nu_2, \quad \alpha\displaystyle\frac{\partial^2 \phi^3}{\partial t\partial t}- \beta \displaystyle\frac{\partial^2 \phi^3}{\partial s\partial s} = R\left(\nu_1\displaystyle\frac{\partial \phi^2}{\partial s}-\nu_2 \displaystyle\frac{\partial \phi^1}{\partial s}\right)\\\noalign{\medskip}
\phi^4=\displaystyle\frac{\partial \phi^1}{\partial s},\quad \phi^5=\displaystyle\frac{\partial \phi^2}{\partial s},\quad
\phi^6= -K\displaystyle\frac{\partial^2\phi^4}{\partial s\partial s}, \quad  \phi^7= -K\displaystyle\frac{\partial^2\phi^5}{\partial s\partial s}
\\\noalign{\medskip}
\phi^i_1= \displaystyle\frac{\partial \phi^i}{\partial t},\quad \phi^i_2= \displaystyle\frac{\partial \phi^i}{\partial s} \qquad (1\leq i \leq 7)
\\\noalign{\medskip}
\psi^1_1=\rho\displaystyle\frac{\partial \phi^1}{\partial t}, \quad
\psi^1_2=\rho\displaystyle\frac{\partial \phi^2}{\partial t}, \quad
\psi^1_3=\alpha\displaystyle\frac{\partial \phi^3}{\partial t}, \quad
\psi^1_4=\psi^1_5=\psi^1_6=\psi^1_7=0
\\\noalign{\medskip}
\psi^2_1=-\phi^6,\quad \psi^2_2=-\phi^7,\quad \psi^2_3=-\beta\displaystyle\frac{\partial \phi^3}{\partial s},\quad
\psi^2_4=-k\displaystyle\frac{\partial \phi^4}{\partial s},\quad
\psi^2_5=-k\displaystyle\frac{\partial \phi^5}{\partial s},\quad
\psi^2_6=\psi^2_7=0
\\\noalign{\medskip}
\phi^1_1+R\phi^3_1\phi^2_2=0,\quad \phi^2_1+R\phi^3_1\phi^1_2=0\,,
\end{array}
\end{equation}
where $\nu_1$ and $\nu_2$ are Lagrange multipliers associated with the nonholonomic constraints, $x=(t,s)$ are the time and space coordinates,   and  the field $\phi\colon \mathbb {R}^2\to\mathbb{R}^7$ is given by  the coordinates of the centerline $(\phi^1(t,s), \phi^2(t,s))$ and by the torsion angle $\phi^3(t,s)$. As one can see in Eq. (\ref{cosserat_eq}) the components $\phi^i,\, i\geq 4$ are determined by $(\phi^1, \phi^2,\phi^3)$.

The equations (\ref{cosserat_eq}) can be written as in (\ref{justin00}), that is, in a intrinsic form. In fact, in this particular case we have
\begin{equation}\label{eq:cos_intrinsic_1}
\begin{array}{l}
[(T^1_2pr_2)^*\chi](\Psi^{(1)}(x)) =
\\\noalign{\medskip}  \qquad
\displaystyle\frac{\partial \phi^1}{\partial t}\Big\vert_{x}dp^1_1 + \displaystyle\frac{\partial \phi^1}{\partial s}\Big\vert_{x}dp^2_1 - \Big(\displaystyle\frac{\partial \psi^1_1}{\partial t}\Big\vert_{x}+\displaystyle\frac{\partial \psi^2_1}{\partial 2}\Big\vert_{x} \Big)dq^1
\\\noalign{\medskip}\quad  + 
\displaystyle\frac{\partial \phi^2}{\partial t}\Big\vert_{x}dp^1_2 + \displaystyle\frac{\partial \phi^2}{\partial s}\Big\vert_{x}dp^2_2 - \Big(\displaystyle\frac{\partial \psi^1_2}{\partial t}\Big\vert_{x}+\displaystyle\frac{\partial \psi^2_2}{\partial 2}\Big\vert_{x} \Big)dq^2
\\\noalign{\medskip}\quad   + 
\displaystyle\frac{\partial \phi^3}{\partial t}\Big\vert_{x}dp^1_3 + \displaystyle\frac{\partial \phi^3}{\partial s}\Big\vert_{x}dp^2_3 - \Big(\displaystyle\frac{\partial \psi^1_3}{\partial t}\Big\vert_{x}+\displaystyle\frac{\partial \psi^2_3}{\partial 2}\Big\vert_{x} \Big)dq^3
\\\noalign{\medskip}\quad   + 
\displaystyle\frac{\partial \phi^4}{\partial t}\Big\vert_{x}dp^1_4 + \displaystyle\frac{\partial \phi^4}{\partial s}\Big\vert_{x}dp^2_4 - \Big(\displaystyle\frac{\partial \psi^1_4}{\partial t}\Big\vert_{x}+\displaystyle\frac{\partial \psi^2_4}{\partial 2}\Big\vert_{x} \Big)dq^4
\\\noalign{\medskip}\quad   + 
\displaystyle\frac{\partial \phi^5}{\partial t}\Big\vert_{x}dp^1_5 + \displaystyle\frac{\partial \phi^5}{\partial s}\Big\vert_{x}dp^2_5 - \Big(\displaystyle\frac{\partial \psi^1_5}{\partial t}\Big\vert_{x}+\displaystyle\frac{\partial \psi^2_5}{\partial 2}\Big\vert_{x} \Big)dq^5
\\\noalign{\medskip}\quad   + 
\displaystyle\frac{\partial \phi^6}{\partial t}\Big\vert_{x}dp^1_6 + \displaystyle\frac{\partial \phi^6}{\partial s}\Big\vert_{x}dp^2_6 - \Big(\displaystyle\frac{\partial \psi^1_6}{\partial t}\Big\vert_{x}+\displaystyle\frac{\partial \psi^2_6}{\partial 2}\Big\vert_{x} \Big)dq^6
\\\noalign{\medskip}\quad   + 
\displaystyle\frac{\partial \phi^7}{\partial t}\Big\vert_{x}dp^1_7 + \displaystyle\frac{\partial \phi^7}{\partial s}\Big\vert_{x}dp^2_7 - \Big(\displaystyle\frac{\partial \psi^1_7}{\partial t}\Big\vert_{x}+\displaystyle\frac{\partial \psi^2_7}{\partial 2}\Big\vert_{x} \Big)dq^7 \, ,
\end{array}
\end{equation}
and
\begin{equation}\label{eq:cos_intrinsic_2}
\begin{array}{l}
[(\tau^2_{\mathcal{M}})^*[dE-(pr_1)^*(\lambda^A_\alpha\eta^\alpha_A)]](\Psi^{(1)}(x))=  
\\\noalign{\medskip} 
\quad -\lambda^1_1dq^1 - \lambda^2_1dq^2- R(\lambda^1_1\phi^2_2(x)-\lambda^2_1\psi^1_2(x))dq^3 - \phi^6(x)dq^4 
\\\noalign{\medskip}
\quad -\phi^7(x)dq^5 - (\phi^4(x)-\phi^1_2(x))dq^6 - (\phi^5(x)-\phi^2_2(x))dq^7
\\\noalign{\medskip}
\quad + 
(\psi^1_1(x)-\rho\phi^1_1(x))dv^1_1 + (\psi^2_1(x)+\phi^6(x)) dv^1_2
\\\noalign{\medskip}
\quad + 
(\psi^1_2(x)-\rho\phi^2_1(x))dv^2_1 + (\psi^2_2(x)+\phi^7(x)) dv^2_2
\\\noalign{\medskip}
\quad + 
(\psi^1_3(x)-\alpha\phi^3_1(x))dv^3_1 + (\psi^2_3(x)+\beta\phi^3_2(x)) dv^3_2
\\\noalign{\medskip}
\quad + \psi^1_4(x)dv^4_1 + (\psi^2_4(x) + K\phi^4_2(x))dv^4_2
\\\noalign{\medskip}
\quad + \psi^1_5(x)dv^5_1 + (\psi^2_5(x) + K\phi^5_2(x))dv^5_2
\\\noalign{\medskip}
\quad + \psi^1_6(x)dv^6_1 + \psi^2_6(x) dv^6_2+ \psi^1_7(x)dv^7_1 + \psi^2_7(x) dv^7_2
\\\noalign{\medskip}
\quad
 + \phi^1_1(x)dp^1_1  + \phi^1_2(x)dp^2_1 + \phi^2_1(x)dp^1_2  + \phi^2_2(x)dp^2_2 
 \\\noalign{\medskip}
\quad
 + \phi^3_1(x)dp^1_3  + \phi^3_2(x)dp^2_3  + \phi^4_1(x)dp^1_4  + \phi^4_2(x)dp^2_4
  \\\noalign{\medskip}
\quad
 + \phi^5_1(x)dp^1_5  + \phi^5_2(x)dp^2_5  + \phi^6_1(x)dp^1_6 + \phi^6_2(x)dp^2_6 
 \\\noalign{\medskip}
\quad
 + \phi^7_1(x)dp^1_7  + \phi^7_2(x)dp^7_5\,.
\end{array}
\end{equation}

Now, from (\ref{justin00}), (\ref{eq:cos_intrinsic_1}) and (\ref{eq:cos_intrinsic_2}) we obtain the equations of the non-holonomic Cosserat rod (\ref{cosserat_eq}).

\paragraph{B. Linear constraints induced by distributions on $Q$}\

Let $D_1,\ldots, D_k$ be $k$ distributions on $Q$. We now consider the constraint submanifold $\mathcal{N}$ defined by $\mathcal{N}=D_1\oplus_Q\cdots\oplus_QD_k$ of $T^1_kQ$.

We now assume, for each $\alpha $ with $1\leq \alpha\leq k$, that the annihilator $D_\alpha^0$ of each distribution $D_\alpha$ is spanned by the $1$-forms on $Q$ locally given by
\begin{equation}\label{eq:form_distribution}
\bar\psi^\alpha_{l_\alpha}=(\bar\psi^\alpha_{l_\alpha})_idq^i,\; l_\alpha=1,\dots, m_\alpha,\end{equation}  where $(\bar\psi^\alpha_{l_\alpha})_i$ is a family of functions defined on $Q$. In this situation, $D_\alpha$ is the set of solutions to the $m_\alpha$ equations \[\Psi^\alpha_{l_\alpha}(v_q)\colon =\bar\psi^\alpha_{l_\alpha}(q) (v_q) =(\bar\psi^\alpha_{l_\alpha})_i(q)v^i=0\,.\]

Thus, $D_\alpha$ is defined by the vanishing of $m_\alpha$ independent functions $\Psi^\alpha_{l_\alpha}\in\mathcal{C}^\infty(TQ)$, that is,
\begin{equation}\label{eq:linear_contraint}
(D_\alpha)(q)\colon =\{v_q\in T_qQ / \Psi^\alpha_{l_\alpha}(v_q)=(\bar\psi^\alpha_{l_\alpha})_iv^i=0,\, 1\leq l_\alpha\leq m_\alpha\}\subset T_qQ\,.
\end{equation}
Thus, the constraint submanifold $\mathcal{N}$ is given by the vanishing of $m=m_1+\cdots+m_k$ independent functions $\Phi^\alpha_{l_\alpha}$ where
\[
\Phi^\alpha_{l_\alpha} (v_{1_q},\ldots, v_{k_q})\colon=[(\tau^{k,\alpha}_Q)^*\Psi^\alpha_{l_\alpha}](v_{1_q},\ldots, v_{k_q})= (\bar\psi^\alpha_{l_\alpha})_i(q)v^i_\alpha\,.
\]

The bundle of constraints forms $\mathcal{F}$
is generated by the $m$  $\mathbb{R}^k$-valued  basic $1$-forms on $T^1_kQ$, with $m=m_1+m_2+\ldots +m_k$

$$\begin{array}{cclcl}
\eta^1_{l_1}&=& \left(   (\tau^k_Q)^*\bar\psi^1_{l_1} ,0,\ldots, 0 \right)&=&\left(   (\bar\psi^1_{l_1})_idq^i,0 ,\ldots, 0\right)
\\ \noalign{\medskip}
\eta^2_{l_2}&=&  \left(  0,(\tau^k_Q)^*\bar\psi^2_{l_2} ,0,\ldots, 0\right)&=&\left(0,  (\bar\psi^2_{l_2})_idq^i,0 ,\ldots, 0\right)
\\ \noalign{\medskip}  \ldots &  & \ldots & & \ldots
\\ \noalign{\medskip}
\eta^k_{l_k}&=&  \left(  0,\ldots, 0,(\tau^k_Q)^*\bar\psi^k_{l_k}  \right)&=&\left(0, \ldots, 0 ,(\bar\psi^k_{l_k})_idq^i\right)
\end{array}
$$

 with $i=1,\ldots, n,\, \alpha = 1,\ldots, k,$ and $ l_\alpha= 1,\ldots, m_\alpha\,.$

Thus 
the implicit nonholonomic Euler-Lagrange field equations (\ref{eq:constraint_implecit_EL_eq}),   are in this case
 
\begin{equation}\label{eq:linear_cont}\begin{array}{l}
\displaystyle\frac{\partial \phi^i}{\partial x^\alpha}\Big\vert_{x} = \phi^i_\alpha(x),\, \quad \psi^\alpha_i(x)=\displaystyle\frac{\partial L}{\partial v^i_\alpha}\Big\vert_{\phi^{(1)}(x)},\quad \quad \Phi^\alpha_{l_\alpha}(\phi^{(1)}(x))=0\\\noalign{\medskip}
\displaystyle\sum_{\alpha=1}^k\displaystyle\frac{\partial \psi^\alpha_i}{\partial x^\alpha}\Big\vert_{x}- \displaystyle\frac{\partial L}{\partial q^i}\Big\vert_{\phi^{(1)}(x)}=\lambda^{l_\alpha}_\alpha (\psi^\alpha_{l_\alpha})_i((\phi(x))) \,,
\end{array}
 \end{equation}
 \begin{remark}
In this particular case, a map $\Psi\colon \mathbb{R}^k\to \mathcal{M}$, with local expression $\Psi(x)=(\phi^i(x),\phi^i_\alpha(x), $ $\psi^\alpha_i(x))$ is a solution of (\ref{eq:linear_cont}), then $\phi(x)=(\phi^i(x))$ is a solution of the nonholonomic field equation associated to linear constraints induced by distributions on $Q$ (see \cite{LMSV-2008}, page 818, for more details about these equations.).
\end{remark}

Also, it is possible to write the intrinsic form of the equations (\ref{eq:linear_cont}) using the characterization given in Proposition \ref{tul_implicit-pontri}.
\section{Hamilton-de Donder-Weyl equations}\label{sec:HDWeq}
In this section we consider the Hamilton-de Donder-Weyl equations. The idea is to give the intrinsic form of these equations in two cases, without constraints and with non-holonomic constraints.

\subsection{Classical Hamilton-de Donder-Weyl equations}
In a similar way that in the Lagrangian approach, we can describe the Hamilton-de Donder-Weyl equations from a variational principle.

Along this subsection we consider an arbitrary Hamiltonian function $H\in\mathcal{C}^\infty((T^1_k)^*Q)$.

We define the functional
\[
\mathcal{H}(\psi)=\int_{\mathbb{R}^k}\left(\displaystyle\sum_{\alpha=1}^k [\psi^*[(\pi^{k,\alpha}_Q)^*\theta]]\wedge d^{k-1}x_\alpha - \psi^*Hd^kx 
\right) \,,
\]
where $\theta\in \Lambda^1(T^*Q)$ is the canonical Liouville form, $\pi^{k,\alpha}_Q\colon (T^1_k)^*Q\to T^*Q$   the projection defined in (\ref{map:taukalpha}) and $\psi\colon U_0\subset\mathbb{R}^k\to (T^1_k)^*Q$ is a map with compact support $K$.

A map $\psi$ is an extremal of the above action if
\[
\displaystyle\frac{d}{ds}\mathcal{H}(\sigma_s\circ \psi)\Big\vert_{s=0}=0 \,,
\]
for every flow $\sigma_s$ on $(T^1_k)^*Q$ such that $\sigma_s(\nu_q)= \nu_q$ for all $\nu_q\in \psi(\partial K).$

In \cite{LSV-2015} we proved that $\psi$ is and extremal of $\mathcal{H}$ if and only if $\psi$ is a solution of the Hamilton-De Donder-Weyl equations, that is, if $\psi$ is locally given by $\psi(x)=(\psi^i(x), \psi^\alpha_i(x))$, then the functions $\psi^i$ and $\psi^\alpha_i$ satisfy the system of partial differential equations
\begin{equation}\label{eq:HDW}
\displaystyle\frac{\partial H}{\partial q^i}\Big\vert_{\psi(x)} = -\, \displaystyle\sum_{\alpha=1}^k\displaystyle\frac{\partial \psi^\alpha_i}{\partial x^\alpha}\Big\vert_{x},\qquad
\displaystyle\frac{\partial H}{\partial p^\alpha_i}\Big\vert_{\psi(x)} = \displaystyle\frac{\partial \psi^i}{\partial x^\alpha}\Big\vert_{x}\,.
\end{equation}

We now give a characterization of the Hamilton-de Donder-Weyl equations (\ref{eq:HDW}) using the canonical $1$-form $\chi$ introduced in Section \ref{sec:Tulczyjew}.

\begin{prop}\label{prop10}
A map   $\psi:U_0\subset \mathbb{R}^k \to (T^1_k)^*Q$ is solution to the Hamilton-de Donder-Weyl equations (\ref{eq:HDW}) if and only if
\begin{equation}\label{justin02}
\left[\chi  - (\tau^k_{\tkqh}) ^*(dH) \right] \left(\psi^{(1)}(x)\right)=0 \, , \quad x \in \r^k \, .
\end{equation}
\end{prop}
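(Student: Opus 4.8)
The plan is to proceed exactly as in the proofs of Propositions \ref{lambda_EL} and \ref{tul_implicit}: compute both terms of the bracket in (\ref{justin02}) by pulling the known local expressions of $\chi$ and $dH$ back along the relevant projections evaluated at the first prolongation $\psi^{(1)}(x)$, and then read off that the resulting equation of $1$-forms is equivalent, coefficient by coefficient, to the system (\ref{eq:HDW}).

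First I would write down the first prolongation explicitly. Since $\psi\colon U_0\subset\r^k\to\tkqh$ has local expression $\psi(x)=(\psi^i(x),\psi^\alpha_i(x))$ and the coordinates on $T^1_k(\tkqh)$ are $(q^i,p^\alpha_i,(v_\alpha)^i,(v_\alpha)^\beta_i)$, the prolongation reads
\[
\psi^{(1)}(x)=\Bigl(\psi^i(x),\psi^\beta_i(x),\derpar{\psi^i}{x^\alpha}\Big\vert_x,\derpar{\psi^\beta_i}{x^\alpha}\Big\vert_x\Bigr),
\]
as in (\ref{psi0-1}). Substituting into the local expression (\ref{eq:local_chi}) for $\chi$, namely $\chi=(v_\alpha)^i\,dp^\alpha_i-(v_\alpha)^\alpha_i\,dq^i$, gives
\[
\chi\bigl(\psi^{(1)}(x)\bigr)=\derpar{\psi^i}{x^\alpha}\Big\vert_x\,dp^\alpha_i\bigl(\psi^{(1)}(x)\bigr)-\Bigl(\sum_{\alpha=1}^k\derpar{\psi^\alpha_i}{x^\alpha}\Big\vert_x\Bigr)dq^i\bigl(\psi^{(1)}(x)\bigr).
\]

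Next I would compute the second term. The canonical projection $\tau^k_{\tkqh}\colon T^1_k(\tkqh)\to\tkqh$ is locally $(q^i,p^\alpha_i,(v_\alpha)^i,(v_\alpha)^\beta_i)\mapsto(q^i,p^\alpha_i)$, so that $\tau^k_{\tkqh}\circ\psi^{(1)}=\psi$. Combined with $dH=\derpar{H}{q^i}\,dq^i+\derpar{H}{p^\alpha_i}\,dp^\alpha_i$, this yields
\[
\bigl[(\tau^k_{\tkqh})^*(dH)\bigr]\bigl(\psi^{(1)}(x)\bigr)=\derpar{H}{q^i}\Big\vert_{\psi(x)}\,dq^i\bigl(\psi^{(1)}(x)\bigr)+\derpar{H}{p^\alpha_i}\Big\vert_{\psi(x)}\,dp^\alpha_i\bigl(\psi^{(1)}(x)\bigr).
\]
Subtracting and equating the coefficients of the independent covectors $dq^i$ and $dp^\alpha_i$, the vanishing of (\ref{justin02}) is equivalent to $-\sum_\alpha\derpar{\psi^\alpha_i}{x^\alpha}=\derpar{H}{q^i}\big\vert_{\psi(x)}$ together with $\derpar{\psi^i}{x^\alpha}=\derpar{H}{p^\alpha_i}\big\vert_{\psi(x)}$, which is precisely the system (\ref{eq:HDW}), proving both implications at once.

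The argument is essentially a direct computation, so there is no deep obstacle; the only point requiring care is the bookkeeping of indices. One must check that the $\alpha$-summation in the $dq^i$-coefficient of $\chi$ (arising because the $\lambda$/$\chi$ construction contracts the $\alpha$-th Tulczyjew derivation against $\Omega^\alpha$, producing the diagonal term $(v_\alpha)^\alpha_i$) matches the summation $\sum_\alpha\partial\psi^\alpha_i/\partial x^\alpha$ in the first Hamilton-de Donder-Weyl equation, and that there is a genuine sign agreement with the minus sign in (\ref{eq:HDW}); the $dp^\alpha_i$-coefficients then match with no summation, giving the second equation for each $\alpha$ separately. Since $dq^i,dp^\alpha_i$ are linearly independent at each point, the equivalence is immediate once the coefficients are lined up, so I expect the verification of signs and summation ranges to be the only place where an error could creep in.
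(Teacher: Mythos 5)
Your proposal is correct and follows essentially the same route as the paper's own proof: a direct local-coordinate computation substituting $\psi^{(1)}$ into the local expressions of $\chi$ and $(\tau^k_{\tkqh})^*(dH)$, then equating the coefficients of the independent covectors $dq^i$ and $dp^\alpha_i$ to recover the system (\ref{eq:HDW}). The sign and index bookkeeping you flag as the only delicate point indeed works out exactly as you state, so no gap remains.
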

\begin{proof}

If $\psi\colon U_0\subset\mathbb{R}^k\to  (T^1_k)^*Q$ is  a map locally given by $\psi(x)=(\psi^i(x), \psi^\alpha_i(x))$, then the first prolongation $\psi^{(1)}\colon U_0\subset\mathbb{R}^k\to T^1_k((T^1_k)^*Q)$ has the local expression (\ref{psi0-1}).

First we compute $\left[\chi  - (\tau^k_{\tkqh}) ^*(dH) \right] \left(\psi^{(1)}(x)\right)$. We know that
\begin{equation}\label{eq:dh}
dH=\displaystyle\frac{\partial H}{\partial q^i}dq^i + \displaystyle\frac{\partial H}{\partial p^\alpha_i}dp^\alpha_i\,.
\end{equation}

From (\ref{eq:local_chi}), (\ref{psi0-1}) and  (\ref{eq:dh}) we obtain
\[
\begin{array}{l}
\left[\chi  - (\tau^k_{\tkqh}) ^*(dH) \right] \left(\psi^{(1)}(x)\right) = \\\noalign{\medskip}
\left( -\, \displaystyle\sum_{\alpha=1}^k\displaystyle\frac{\partial \psi^\alpha_i}{\partial x^\alpha}\Big\vert_{x} -\displaystyle\frac{\partial H}{\partial q^i}\Big\vert_{\psi(x)} \right)dq^i\left(\psi^{(1)}(x)\right)  + \left( \displaystyle\frac{\partial \psi^i}{\partial x^\alpha}\Big\vert_{x} -\displaystyle\frac{\partial H}{\partial p^\alpha_i}\Big\vert_{\psi(x)} \right)dp^\alpha_i\left(\psi^{(1)}(x)\right)\,.
\end{array}
\]

Therefore, $\psi$ satisfies (\ref{justin02}) if and only if 
\[\displaystyle\frac{\partial H}{\partial q^i}\Big\vert_{\psi(x)} = -\, \displaystyle\sum_{\alpha=1}^k\displaystyle\frac{\partial \psi^\alpha_i}{\partial x^\alpha}\Big\vert_{x},\qquad
\displaystyle\frac{\partial H}{\partial p^\alpha_i}\Big\vert_{\psi(x)} = \displaystyle\frac{\partial \psi^i}{\partial x^\alpha}\Big\vert_{x}\,,\] that is, if and only if $\psi$ is a solution of the Hamilton-de Donder-Weyl equations.
\end{proof}

The expression (\ref{justin02}) is called \textit{the intrinsic form of the Hamilton-de Donder Weyl equations in $(T^1_k)^*Q$.}

\begin{remark}
This Proposition is a generalization of the Proposition 3.11 of \cite{YM-2006b}.
\end{remark}

Let us observe that in the previous result the Hamiltonian function is an arbitrary function defined on the cotangent bundle of $k^1$-covelocities $(T^1_k)^*Q$. An interesting case is when the Hamiltonian function is given by a hyperregular Lagrangian $L$.

Given a hyperregular Lagrangian $L\in \mathcal{C}^\infty (T^1_kQ)$, the Legendre transformation $FL\colon T^1_kQ \to (T^1_k)^*Q$ is a global diffeomorphism, then a hyperregular Hamiltonian $H\in \mathcal{C}^\infty ((T^1_k)^*Q)$ can be defined by
\begin{equation}\label{hamiltonian}
H= E_L\circ (FL)^{-1} \, , 
\end{equation}
where $E_L$ is the energy function, with local expression
\[
E_L(q^i, v^i_\alpha)=\displaystyle\frac{\partial L}{\partial v^i_\alpha} v^i_\alpha - L(q^i, v^i_\alpha)\,.
\]

In this case we can say that the above proposition is the dual of   Proposition \ref{lambda_EL}. 
\subsection{Intrinsic form of the non-holonomic  Hamilton-de Donder-Weyl equations}

In this subsection we consider the Hamilton-de Donder-Weyl equations when a constraint submanifold $\mathcal{N}\subset T^1_kQ$ is given. We again consider the geometric objects described in Subsection \ref{subsec:constraints}. The only different is that in this case we consider a hyperregular Lagrangian function $L$. Then we can consider the Hamiltonian $H$ defined by the expression  (\ref{hamiltonian}).

If the constraint submanifold $\mathcal{N}$ is locally represented by a family of $m$ equations of the form $\Phi_A(q^i,v^i_\alpha)=0, \, 1\leq A \leq m$, then the constraint function on $(T^1_k)^* Q$ becomes $\Psi_A = \Phi_A\circ FL^{-1}\colon (T^1_k)^*Q\to \mathbb{R}\,.$
\begin{prop}
A map   $\psi:U_0\subset \mathbb{R}^k \to (T^1_k)^*Q$, such that $\Psi_A(\psi(x))=0, \, 1\leq A\leq m$, satisfies
\begin{equation}\label{intHDWeq}
\left[\chi  - (\tau^k_{\tkqh})^*\Big (dH-\lambda^A_\alpha (FL^{-1})^*\eta^\alpha_A\Big) \right] \left(\psi^{(1)}(x)\right)=0 \, , \quad x \in \r^k \, ,
\end{equation}
if and only $\psi$ is a solution of the system of partial differential equations given by
\begin{equation}\label{eq:nonholonomic_Ham}
\begin{array}{rcl}
\displaystyle\frac{\partial H}{\partial q^i}\Big\vert_{\psi(x)} &=& -\, \displaystyle\sum_{\alpha=1}^k\displaystyle\frac{\partial \psi^\alpha_i}{\partial x^\alpha}\Big\vert_{x} + \lambda^A_\alpha(\eta^\alpha_A)_i (FL^{-1}(\psi(x)) \, ,  \\\noalign{\medskip}
\displaystyle\frac{\partial H}{\partial p^\alpha_i}\Big\vert_{\psi(x)} &=& \displaystyle\frac{\partial \psi^i}{\partial x^\alpha}\Big\vert_{x}\,.
\end{array}
\end{equation}

\end{prop}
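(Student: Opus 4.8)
The plan is to adapt, essentially verbatim, the proof of Proposition~\ref{prop10}, inserting the contribution of the constraint term. I would write $\psi$ locally as $\psi(x)=(\psi^i(x),\psi^\alpha_i(x))$, take its first prolongation $\psi^{(1)}$ as in (\ref{psi0-1}), and evaluate the $1$-form in (\ref{intHDWeq}) on $\psi^{(1)}(x)$ in the coordinates $(q^i,p^\alpha_i,(v_\alpha)^i,(v_\alpha)^\beta_i)$ on $T^1_k(\tkqh)$. From the local expression (\ref{eq:local_chi}) of $\chi$ together with (\ref{psi0-1}) one reads off, exactly as in Proposition~\ref{prop10},
\[
\chi\big(\psi^{(1)}(x)\big)=\frac{\partial\psi^i}{\partial x^\alpha}\Big\vert_x\,dp^\alpha_i\big(\psi^{(1)}(x)\big)-\sum_{\alpha=1}^k\frac{\partial\psi^\alpha_i}{\partial x^\alpha}\Big\vert_x\,dq^i\big(\psi^{(1)}(x)\big).
\]
Since $\tau^k_{\tkqh}$ forgets the velocity coordinates, pulling back $dH=\frac{\partial H}{\partial q^i}\,dq^i+\frac{\partial H}{\partial p^\alpha_i}\,dp^\alpha_i$ and evaluating at $\psi^{(1)}(x)$ simply replaces the partial derivatives of $H$ by their values at the base point $\psi(x)$.

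The one new ingredient is the term $(\tau^k_{\tkqh})^*\big(\lambda^A_\alpha(FL^{-1})^*\eta^\alpha_A\big)$. Here I would use that, by (\ref{map:LocLegtkq}), the Legendre map covers the identity on $Q$, so $FL^{-1}$ is fibre-preserving and fixes the coordinates $q^i$. Hence each semibasic constraint form $\eta^\alpha_A=(\eta^\alpha_A)_i\,dq^i$ pulls back to $(FL^{-1})^*\eta^\alpha_A=\big((\eta^\alpha_A)_i\circ FL^{-1}\big)\,dq^i$: it remains semibasic and only its coefficients change. Pulling back further by $\tau^k_{\tkqh}$ and evaluating on $\psi^{(1)}(x)$ therefore contributes solely to the $dq^i$-component, namely $\lambda^A_\alpha(\eta^\alpha_A)_i\big(FL^{-1}(\psi(x))\big)\,dq^i\big(\psi^{(1)}(x)\big)$.

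Finally I would collect the coefficients of the independent covectors $dq^i\big(\psi^{(1)}(x)\big)$ and $dp^\alpha_i\big(\psi^{(1)}(x)\big)$, no other basic $1$-forms appearing. The $dp^\alpha_i$-coefficient reproduces the second equation of (\ref{eq:nonholonomic_Ham}), while the $dq^i$-coefficient combines the $\chi$-contribution, the $\partial H/\partial q^i$-term and the new constraint term into the first equation of (\ref{eq:nonholonomic_Ham}); since these covectors are linearly independent, the vanishing of the whole $1$-form in (\ref{intHDWeq}) is equivalent to the simultaneous vanishing of both coefficients, which gives the claimed equivalence. The standing hypothesis $\Psi_A(\psi(x))=0$ only ensures that $\psi$ takes values in the image $FL(\mathcal{N})$ and plays no role in the computation itself. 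I expect the sole point requiring genuine care to be the bookkeeping of the constraint term, i.e.\ verifying that $(FL^{-1})^*\eta^\alpha_A$ stays semibasic so that it interacts only with the position equation; everything else is the routine calculation already carried out in Proposition~\ref{prop10}.
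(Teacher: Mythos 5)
Your proposal is correct and is exactly what the paper intends: its proof of this proposition consists of the single remark that it is ``similar to the proof of Proposition \ref{prop10}'', and your computation carries out that adaptation faithfully, including the key observation that $FL^{-1}$ covers the identity on $Q$ so that $(FL^{-1})^*\eta^\alpha_A$ remains semibasic and contributes only to the $dq^i$-component.
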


\begin{proof}
It's similar to the proof of   Proposition \ref{prop10}.
\end{proof}

The equations (\ref{eq:nonholonomic_Ham}), joint with the conditions $\Psi_A(\psi(x))=0, \, 1\leq A\leq m$, are the \textit{non-holonomic Hamilton-de Donder-Weyl equations on $(T^1_k)^*Q$} defined in \cite{LMSV-2008}. Therefore, the equations (\ref{intHDWeq}) are called \textit{the intrinsic non-holonomic Hamilton-de Donder-Weyl equations.}

\begin{remark}
The above proposition is the Hamiltonian counterpart of the Proposition \ref{tul_implicit-pontri}. 
\end{remark}

\section{Conclusions}\label{sec:conclusions}
This work presents the variational principles and the  intrinsic versions of several equations in field theories, in particular, for the Classical Euler-Lagrange field equations, the implicit Euler-Lagrange field equations and the non-holonomic implicit Euler-Lagrange field equations. The advantages of the variational and intrinsic versions of these equations is that the Lagrangians functions are not necessary regular Lagrangians. 

In particular,   equation (\ref{justin00}) should be highlighted, since all the cases described in this work could be considered as particular case of the one described by this equation.

We also present two examples of the results of this work: Navier's equations and non-holonomic Cosserat rod.

The key to being able to write these equations in an intrinsic way has been to define, using Tulczyjew's derivations, two canonical forms $\lambda$ and $\chi$.
  
Finally we present the Hamiltonian counterpart of these results, in particular when the Hamiltonian function is defined from a hyper-regular Lagrangian function.
  
In Mechanics, the implicit Euler-Lagrange field equations can be obtained using Dirac structures \cite{YM-2006a,YM-2006b}. 
In fact, some of our results are a generalization of the results of these works. For this reason, we think that as a future work, it will be interesting to analyse if Dirac's structures  can help to obtain new descriptions of the Euler-Lagrange field equations.

\subsection*{Acknowledgements}
Modesto Salgado acknowledges the financial support from the 
Spanish Ministerio de Econom\'ia y Competitividad
project MTM2014--54855—P and
the Ministerio de Ciencia, Innovaci\'on y Universidades project
PGC2018-098265-B-C33.

Silvia Vilariño acknowledges partial financial support from 
the  Spanish Ministerio de Ciencia, Innovaci\'on y Universidades project PGC2018-098265-B-C31; from the Spanish Ministerio de Econom\'{\i}a y Competitividad project MTM2015--64166--C2-1-P;
and from the Aragon Government grant E38$\_$17R.


\end{document}